\title{Lower Bounds for Semi-adaptive Data Structures via Corruption}
\author{Pavel Dvo\v{r}\'{a}k}{Charles University, Prague, Czech Republic}{koblich@iuuk.mff.cuni.cz}{}{Supported by Czech Science Foundation 
GA{\v C}R (grant \#19-27871X).}
\author{Bruno Loff}{INESC-Tec and University of Porto, Portugal}{bruno.loff@gmail.com}{}{This project was financed by the Portuguese funding agency, FCT - Fundação para a Ciência e a Tecnologia, through national funds, and co-funded by the FEDER, where applicable. Bruno Loff was the recipient of the postdoctoral FCT fellowship number SFRH/BPD/116010/2016.}
\authorrunning{P. Dvo\v{r}\'{a}k and B. Loff} 
\keywords{semi-adaptive dynamic data structure, polynomial lower bound, corruption bound, information theory} 
\newtheorem{fact}[theorem]{Fact}
\newtheorem{observation}[theorem]{Observation}
\newcommand{\R}{\mathbb{R}}
\DeclareMathOperator*{\expec}{\mathbb{E}}
\newcommand{\Disj}{\mathsf{DISJ}}
\newcommand{\IP}{\mathsf{IP}}
\newcommand{\GH}{\mathsf{GHD}}
\newcommand{\ORT}{\mathsf{ORT}}
\newcommand{\mono}{\mathsf{mono}}
\newcommand{\disc}{\mathsf{disc}}
\newcommand{\cb}{\mathsf{cb}}
\newcommand{\scb}{\mathsf{scb}}
\newcommand{\tq}{t_{\mathrm q}}
\newcommand{\tu}{t_{\mathrm u}}
\newcommand{\tp}{t_{\mathrm p}}
\newcommand{\bfx}{{\mathbf x}}
\newcommand{\bfy}{{\mathbf y}}
\newcommand{\bfz}{{\mathbf z}}
\newcommand{\rmx}{{\mathrm x}}
\newcommand{\rmy}{{\mathrm y}}
\newcommand{\rmz}{{\mathrm z}}
\newcommand{\supp}{{\mathsf{supp}}}
\newcommand{\DKL}{{\mathsf{D}_{\mathsf{KL}}}}
\newcommand*\bfell{\ensuremath{\boldsymbol\ell}}
\newtheorem*{observation*}{Observation}
\newtheorem*{lemma*}{Lemma}
\begin{document}
\maketitle
\begin{abstract}
 In a dynamic data structure problem we wish to maintain an encoding of some data in memory, in such a way that we may efficiently carry out a sequence of queries and updates to the data.
A long-standing open problem in this area is to prove an unconditional polynomial lower bound of a trade-off between the update time and the query time of an adaptive dynamic data structure computing some explicit function.
Ko and Weinstein provided such lower bound for a restricted class of {\em semi-adaptive\/} data structures, which compute the Disjointness function. 
There, the data are subsets $x_1,\dots,x_k$ and $y$ of $\{1,\dots,n\}$, the updates can modify $y$ (by inserting and removing elements), and the queries are an index $i \in \{1,\dots,k\}$ (query $i$ should answer whether $x_i$ and $y$ are disjoint, i.e., it should compute the Disjointness function applied to $(x_i, y)$). The semi-adaptiveness places a restriction in how the data structure can be accessed in order to answer a query. We generalize the lower bound of Ko and Weinstein to work not just for the Disjointness, but for any function having high complexity under the smooth corruption bound.


\end{abstract}

\section{Introduction}

In a dynamic data structure problem we wish to maintain an encoding of some data in memory, in such a way that we may efficiently carry out a sequence of queries and updates to the data.
 A suitable computational model to study dynamic data structures is the cell probe model of Yao \cite{Yao79}. Here we think of the memory divided into registers, or \emph{cells}, where each cell can carry $w$ bits, and we measure efficiency by counting the number of memory accesses, or \emph{probes}, needed for each query and each update --- these are respectively called the \emph{query time} $\tq$ and \emph{update time} $\tu$.
The main goal of this line of research is to understand the inherent trade-off between $w$, $\tq$ and $\tu$, for various interesting problems. Specifically, one would like to show lower bounds on $t = \max\{ \tq, \tu \}$ for reasonable choices of $w$ (which is typically logarithmic in the size of the data).

The first lower bound for this setting was proven by Fredman and Saks \cite{fredman1989cell}, which proved $t = \Omega\bigl(\log n / \log \log n\bigr)$ for various problems. These lower bounds were successively improved \cite{puaatracscu2004tight,patrascu2006logarithmic,larsen2012cell,larsen2020crossing}, and we are now able to show that certain problems with non-Boolean queries require $t = \Omega\bigl((\log n/\log \log n)^2\bigr)$, and certain problems with Boolean queries require $t = \Omega\bigl((\log n/\log \log n)^{3/2}\bigr)$.

The major unsolved question in this area is to prove a polynomial lower bound on $t$. For example, consider the dynamic reachability problem, where we wish to maintain a directed $n$-vertex graph in memory, under edge insertions and deletions, while being able to answer reachability queries (``\emph{is vertex $i$ connected to vertex $j$?''}). Is it true that any scheme for the dynamic reachability problem requires $t = \Omega(n^{\delta})$, for some constant $\delta > 0$? Indeed, such a lower bound is known under various complexity-theoretic assumptions\footnote{See \cite{patrascu2010towards,abboud2014popular}. Strictly speaking, these conditional lower bounds only work if the preprocessing time, which is the time taken to encode the data into memory, is also bounded. But we will ignore this distinction.}, the question is whether such a lower bound may be proven unconditionally.

In an influential paper~\cite{patrascu10}, Mihai P\u{a}tra\c{s}cu proposed an approach to this unsolved question. He defines a data structure problem, called the \emph{multiphase problem}. 
Let us represent partial functions $f: \{0,1\}^n \times \{0,1\}^n \to \{0,1\}$ as total functions $f': \{0,1\}^n \times \{0,1\}^n \to \{0,1,*\}$ where $f'(x,y) = *$ if $f(x,y)$ is not defined.
Then associated with a partial Boolean function $f: \{0,1\}^n \times \{0,1\}^n \to \{0,1, \ast\}$, and a natural number $k \ge 1$, we may define a corresponding \emph{multiphase problem of $f$} as the following dynamic process:
\begin{description}
\item[Phase I - Initialization.] We are given $k$ inputs $x_1,\dots,x_k \in \{0,1\}^n$, and are allowed to preprocess this input in time $n k \cdot \tp$.
\item[Phase II - Update.] We are then given another input $y \in \{0,1\}^n$, and we have time $n\cdot \tu$ to read and update the memory locations from the data structure constructed in Phase I.
\item[Phase III - Query.] Finally, we are given a query $i \in [k]$, we have time $\tq$ to answer the question whether $f(x_i,y) = 1$. If $f(x_i,y)$ is not defined, the answer can be arbitrary.
\end{description}
Typically we will have $k = \text{poly}(n)$. Let us be more precise, and consider randomized solutions to the above problem.

\begin{definition}[Scheme for the multiphase problem of $f$] Let $f: \{0,1\}^n \times \{0,1\}^n \to \{0,1, \ast\}$ be a partial Boolean function.
  A \emph{scheme} for the multiphase problem of $f$ with preprocessing time $\tp$, update time $\tu$ and query time $\tq$ is a triple $D = \bigl(E, \{U_y\}_{y \in \{0,1\}^n}, \{Q_i\}_{i \in [k]}\bigr)$, where:
  \begin{itemize}
  \item $E: \bigl(\{0,1\}^n\bigr)^k \to \bigl(\{0,1\}^w\bigr)^s$ maps the input $x$ to the memory contents $E(x)$, where each of the $s$ memory locations holds $w$ bits. $E$ must be computed in time $n k \cdot \tp$ by a Random-Access Machine (RAM).
  \item For each $y \in \{0,1\}^n$, $U_y:\bigl(\{0,1\}^w\bigr)^s \to \bigl(\{0,1\}^w\bigr)^u$ is a decision-tree of depth $\le n \cdot \tu$, which reads $E(x)$ and produces a sequence $U_y\bigl(E(x)\bigr)$ of $u$ \emph{updates}.\footnote{In the usual way of defining the update phase, we have a read/write decision-tree $U_y$ which changes the very same cells that it reads. But when $w = \Omega(\log s)$, this can be seen to be equivalent, up to constant factors, to the definition we present here, where we have a decision-tree $U_y$ that writes the updates on a separate location. In order to simulate a scheme that uses a read/write decision-tree, we may use a hash table with $O(1)$ worst-case lookup time, such as cuckoo hashing. Then we have a read-only decision-tree $U'_y(E(x))$ whose output is the hash table containing all the $i \in [s]$ which were updated by $U_y(E(x))$, associated with their final value in the execution of $U_y(E(x))$. Note that the hash table itself is static. }
  \item For each $i \in [k]$, $Q_i:\bigl(\{0,1\}^w\bigr)^s \times \bigl(\{0,1\}^w\bigr)^u \to \{0,1\}$ is a decision-tree of depth $\le \tq$.\footnote{All our results will hold even if $Q_i$ is allowed to depend arbitrarily on $x_i$. This makes for a less natural model, however, so we omit this from the definitions.}
  \item For all $x \in \bigl(\{0,1\}^n\bigr)^k$, $y \in \{0,1\}^n$, and $i \in [k]$,
    \[
      f(x_i, y) \neq * \implies Q_i\bigl(E(x), U_y(E(x))\bigr) = f(x_i, y).
    \]
  \end{itemize}

  In a \emph{randomized scheme} for the multiphase problem of $f$, each $U_y$ and $Q_i$ are distributions over decision trees, and it must hold that for all $x \in \bigl(\{0,1\}^n\bigr)^k$, $y \in \{0,1\}^n$, and $i \in [k]$,
    \[
      f(x_i, y) \neq * \implies \Pr_{Q_i, U_y}\bigl[ Q_i\bigl(E(x), U_y(E(x))\bigr) = f(x_i, y) \bigr] \ge 1 - \varepsilon.
    \]
    The value $\varepsilon$ is called the \emph{error probability} of the scheme.
\end{definition}

\bigskip\noindent 
P\u{a}tra\c{s}cu~\cite{patrascu10} considered this problem where $f = \Disj$ is the Disjointness function:
\[
 \Disj(x,y) = 
 \begin{cases}
  0 & \text{if there exists $i \in [n]$ such that $x_i = y_i = 1$} \\
  1 & \text{otherwise}
 \end{cases}
\]
He conjectured that any scheme for the multiphase problem of $\Disj$ will necessarily have $\max\{\tp, \tu, \tq\} \geq n^\delta$ for some constant $\delta > 0$.

P\u{a}tra\c{s}cu shows that such lower bounds on the multiphase problem for $\Disj$ would imply polynomial lower bounds for various dynamic data structure problems. For example such lower bounds would imply that dynamic reachability requires $t = \Omega(n^\delta)$. He also shows that these lower bounds hold true under the assumption that 3SUM has no sub-quadratic algorithms.

Finally, P\u{a}tra\c{s}cu then defines a 3-player Number-On-Forehead (NOF) communication game, such that lower bounds on this game imply matching lower bounds for the multiphase problem. The game associated with a function $f:\{0,1\}^n\times\{0,1\}^n\to\{0,1\}$ is as follows:
\begin{enumerate}
 \item Alice is given $x_1,\dots,x_k \in \{0,1\}^n$ and $i \in [k]$, Bob gets $y\in \{0,1\}^n$ and $i \in [k]$ and Charlie gets $x_1,\dots,x_k$ and $y$.
 \item Charlie sends a private message of $\ell_1$ bits to Bob and then he is silent.
 \item Alice and Bob communicate $\ell_2$ bits and want to compute $f(x_i,y)$.
\end{enumerate}
P\u{a}tra\c{s}cu~\cite{patrascu10} conjectured that if $\ell_1$ is $o(k)$, then $\ell_2$ has to be bigger than the communication complexity of $f$.
However, this conjecture turned out to be false. The randomized communication complexity of $\Disj$ is $\Omega(n)$ \cite{razborov92,kalyanasundaram92,bar-yossef02}, but Chattopadhyay et al.~\cite{chattopadhyay12} construct a protocol for $f = \Disj$ where both $\ell_1, \ell_2 = O\bigl(\sqrt{n}\cdot \log k\bigr)$. They further show that any randomized scheme in the above model can be derandomized.

So the above communication model is more powerful than it appears at first glance.\footnote{The conjecture remains that if $\ell_1 = o(k)$, then $\ell_2$ has to be larger than the maximum distributional communication complexity of $f$ under a product distribution. This is $\tilde\Theta\bigl(\sqrt n\bigr)$ for Disjointness~\cite{babai86}.} 
However, a recent paper by Ko and Weinstein~\cite{ko19} succeeds in proving lower bounds for a simpler version of the multiphase problem, which translate to lower bounds for a restricted class of dynamic data structure schemes. They manage to prove a lower bound of $\Omega(\sqrt n)$ for the simpler version of the multiphase problem which is associated with the Disjointness function $f = \Disj$. Our paper generalizes their result:
\begin{itemize}
\item We generalize their lower bound to any function $f$ having large complexity according to the smooth corruption bound, under a product distribution. Disjointness is such a function \cite{babai86}, but so is Inner Product, Gap Orthogonality, and Gap Hamming Distance~\cite{sherstov11}.
\item The new lower-bounds we obtain (for Inner-product, Gap Orthogonality, and Gap Hamming Distance) are stronger --- $\Omega(n)$ instead of the lower-bound $\Omega(\sqrt n)$ for disjointness. As far as was known before our result, it could well have been that every function had a scheme for the simpler version of the multiphase problem using only $O(\sqrt n)$ communication.
\item Ko and Weinstein derive their lower-bound via a cut-and-paste lemma which works specifically for disjointness. This cut-and-paste lemma is a more robust version of the one appearing in \cite{bar-yossef02}, made to work not only for protocols, where the inputs $\bfx$ and $\bfy$ are independent given the transcript $\bfz$ of the protocol, but also for random-variables that are ``protocol-like'', namely any $(\bfx,\bfy,\bfz)$ where $I(\bfx:\bfy\mid \bfz)$ is close to $0$. Instead, we directly derive the existence of a large nearly-monochromatic rectangle, from the existence of such protocol-like random-variables, which is what then allows us to use the smooth corruption bound. This result is our core technical contribution, and may be of independent interest.
\end{itemize}

All of the above lower bounds will be shown to hold also for randomized schemes, and not just for deterministic schemes.

\subsection{Semi-adaptive Multiphase Problem}
Let us provide rigorous definitions.

\begin{definition}[Semi-adaptive random data structure~\cite{ko19}]
  Let $f: \{0,1\}^n \times \{0,1\}^n \to \{0,1, *\}$ be a partial function. A scheme $D = \bigl(E, \{U_y\}_{y \in \{0,1\}^n}, \{Q_i\}_{i \in [k]}\bigr)$ for the multiphase problem of $f$ is called \emph{semi-adaptive} if any path on the decision-tree $Q_i:\bigl(\{0,1\}^w\bigr)^s \times \bigl(\{0,1\}^w\bigr)^u \to \{0,1\}$ first queries the first part of the input (the $E(x)$ part), and then queries the second part of the input (the $U(E(x))$ part). If $D$ is randomized, then this property must hold for every randomized choice of $Q_i$.
\end{definition}
We point out that the reading of the cells in each part is completely adaptive. The restriction is only that the data structure cannot read cells of $E(x)$ if it already started to read cells of $U(E(x))$.
Ko and Weinstein state their result for deterministic data structures, i.e., $\varepsilon = 0$ thus the data structure always returns the correct answer.

\begin{theorem}[Theorem 4.9 of Ko and Weinstein~\cite{ko19}]
 \label{thm:lb_ds}
 Let $k \geq \omega(n)$.
 Any semi-adaptive deterministic data structure that solves the multiphase problem of the $\Disj$ function, must have either $\tu \cdot n \geq \Omega\bigl(k/w\bigr)$ or $\tq \geq \Omega\bigl(\sqrt{n}/w\bigr)$.
\end{theorem}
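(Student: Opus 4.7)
The plan is to prove the theorem by contradiction: assume the data structure has $\tu \cdot n \cdot w < c_1 k$ and $\tq \cdot w < c_2 \sqrt{n}$ for sufficiently small constants $c_1, c_2 > 0$. First I would fix a product distribution $\mu \times \nu$ on $\{0,1\}^n \times \{0,1\}^n$ under which $\Disj$ has smooth corruption bound $\Omega(\sqrt{n})$ (such a distribution exists by~\cite{babai86}), and draw $\bfx = (\bfx_1,\dots,\bfx_k) \sim \mu^k$, $\bfy \sim \nu$, and $\bfi$ uniformly from $[k]$. Then I would simulate the data structure as a two-party interactive protocol between Alice (holding $\bfx$) and Bob (holding $\bfy$), both knowing $\bfi$: in the update phase, Bob simulates $U_\bfy$ by asking Alice for the values of the cells of $E(\bfx)$ that $U_\bfy$ needs to read (transcript $\Pi_U$ of at most $\tu n w$ bits, after which Bob holds $U_\bfy(E(\bfx))$); in Phase A of the query, Alice simulates the first part of $Q_\bfi$ and sends the $w$-bit values of the cells she reads (transcript $\Pi_A$); in Phase B, Bob completes $Q_\bfi$ locally using his copy of $U_\bfy(E(\bfx))$ (transcript $\Pi_B$ of his Phase B reads), with no further communication. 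The semi-adaptive restriction is exactly what makes this clean split possible, and $|\Pi_A|+|\Pi_B| \le O(\tq w)$.

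Let $\bfz = (\Pi_A, \Pi_B, \bfi, \bfx_{-\bfi})$. The standard rectangle property of communication protocols gives $I(\bfx_\bfi : \bfy \mid \Pi_U, \bfz) = 0$. Dropping $\Pi_U$ from the conditioning, an information-theoretic averaging argument over the i.i.d.\ blocks $\bfx_1,\dots,\bfx_k$, based on subadditivity of information, shows
\[
\expec_\bfi\; I\bigl(\bfx_\bfi : \bfy \mid \bfz\bigr) \le O\!\left(\frac{\tu n w}{k}\right) \le O(c_1),
\]
so for a typical index $\bfi$ the triple $(\bfx_\bfi, \bfy, \bfz)$ is \emph{protocol-like} in the sense of the introduction: $\bfx_\bfi$ and $\bfy$ are nearly independent given $\bfz$. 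I would then invoke the paper's core technical lemma, which extracts from such a protocol-like triple a rectangle $A \times B \subseteq \{0,1\}^n \times \{0,1\}^n$ that is (i) nearly monochromatic for $\Disj$ (agreeing with the data structure's constant output on the corresponding leaf, which is correct by the deterministic correctness guarantee), and (ii) has $\mu \times \nu$-measure at least $2^{-O(\tq w)}$. The critical feature of (ii) is that only the communicated portion of $\bfz$ (namely $\Pi_A,\Pi_B$, totaling $\le O(\tq w)$ bits) enters the exponent, while the Alice-side information $(\bfi, \bfx_{-\bfi})$ does not---it is this separation of $\Pi_U$ from $\Pi_A,\Pi_B$ that produces the $k$-vs-$\sqrt{n}$ trade-off in the theorem statement.

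Since $\tq w < c_2\sqrt{n}$ by assumption, the extracted rectangle has $\mu\times\nu$-measure at least $2^{-O(c_2\sqrt{n})}$. Choosing $c_2$ smaller than the hidden constant in the smooth corruption bound $\scb_\alpha^{\mu\times\nu}(\Disj) = \Omega(\sqrt{n})$, this contradicts the smooth corruption bound, which forbids any nearly-monochromatic rectangle with $\mu\times\nu$-measure larger than $2^{-\Omega(\sqrt{n})}$. The hard part will be the rectangle-extraction step invoked in the previous paragraph: when $I(\bfx_\bfi : \bfy \mid \bfz)$ is merely small rather than zero, the conditional distribution of $(\bfx_\bfi, \bfy)$ given $\bfz$ is only close to a product and its support is not literally a combinatorial rectangle, so one must round the near-product distribution to a genuine rectangle, absorbing the rounding loss into the $\alpha$-smoothing parameter of the smooth corruption bound while simultaneously arranging for the size exponent to depend only on the communicated part of $\bfz$. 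This direct passage from protocol-like random variables to a large nearly-monochromatic rectangle is identified in the introduction as the paper's core technical contribution, and it replaces Ko and Weinstein's cut-and-paste lemma that was tailored to $\Disj$.
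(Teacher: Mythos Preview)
Your overall strategy---reduce the semi-adaptive scheme to ``protocol-like'' random variables $(\bfx_i,\bfy,\bfz)$, then invoke the paper's rectangle-extraction machinery, then contradict the corruption bound for $\Disj$ under the Babai--Frankl--Simon product distribution---is precisely how the paper recovers Theorem~\ref{thm:lb_ds} as a special case of Theorem~\ref{thm:MainResult} (see Section~\ref{sec:App}). You also correctly identify the rectangle extraction as the core technical step, and correctly note that the rectangle's size exponent depends only on $I(\bfx_i:\bfz)$ and $I(\bfy:\bfz)$, which are $O(\tq w)$ because the $x_{-i}$ part of $\bfz$ is independent of both $\bfx_i$ and $\bfy$.

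The gap is in your re-derivation of the reduction (what the paper imports as Theorem~\ref{thm:rand_var}). With $\bfz=(\Pi_A,\Pi_B,\bfi,\bfx_{-\bfi})$ you claim a ``subadditivity'' bound $\expec_{\bfi}\,I(\bfx_{\bfi}:\bfy\mid \bfz)\le O(\tu n w/k)$. Unwinding, this requires $\sum_i I(\bfx_i:\Pi_U\mid \Pi_A^{(i)},\Pi_B^{(i)},\bfx_{-i})\le O(\tu n w)$. But conditioning on $\bfx_{-i}$ (rather than $\bfx_{<i}$) does \emph{not} give subadditivity: if $\Pi_U$ were a single bit equal to $\bigoplus_i \bfx_i[1]$, then each $I(\bfx_i:\Pi_U\mid \bfx_{-i})=1$ and the sum is $k\gg H(\Pi_U)=1$. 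The query transcripts $\Pi_A^{(i)},\Pi_B^{(i)}$ do not in general rescue this, since they change with $i$ and need not reveal the relevant part of $\Pi_U$. This is exactly why Ko--Weinstein's construction (which the paper uses as a black box) is more elaborate: their $\bfz$ contains the transcripts of \emph{several} query phases (governed by the parameter $p$), and the resulting bound is $I(\bfx_i:\bfy\mid\bfz)\le O(\tu n w/p)$ with $p\le o(k)$, not $O(\tu n w/k)$. Your simulation of the update phase as Alice--Bob interaction is fine, but the passage from $I(\bfx_i:\bfy\mid \Pi_U,\bfz)=0$ to a small $I(\bfx_i:\bfy\mid\bfz)$ by ``dropping $\Pi_U$'' needs the actual Ko--Weinstein argument, not a one-line averaging over i.i.d.\ blocks.
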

To prove the lower bound they reduce the semi-adaptive data structure into a low correlation random process.

\begin{theorem}[Reformulation of Lemma 4.1 of Ko and Weinstein~\cite{ko19}]
\label{thm:rand_var}
Let $\bfx_1,\dots, \bfx_k$ be random variables over $\{0,1\}^n$ and each of them is independently distributed according to the same distribution $\mu_1$ and let $\bfy$ be a random variable over $\{0,1\}^n$ distributed according to $\mu_2$ (independently of $\bfx_1,\dots, \bfx_k$).
Let $D$ be a randomized semi-adaptive scheme for the multiphase problem for a partial function $f: \{0,1\}^n \times \{0,1\}^n \to \{0,1,\ast\}$ with error probability bounded by $\varepsilon$.
Then, for any $p \leq o(k)$ there is a random variable $\bfz \in \{0,1\}^m$ and $i \in [k]$ such that:
\begin{enumerate}
 \item $\Pr\bigl[f(\bfx_i,\bfy) \neq *, ~\bfz_m \neq f(\bfx_i, \bfy)  \bigr] \leq \varepsilon$.
 \item $I\bigl(\bfx_i : \bfy\, \bfz \bigr) \leq \tq \cdot w + o(\tq \cdot w)$.
 \item $I\bigl(\bfy : \bfz\bigr) \leq \tq \cdot w$.
 \item $I\bigl(\bfx_i : \bfy \mid \bfz\bigr) \leq O\bigl(\frac{\tu \cdot n \cdot w}{p}\bigr)$.
\end{enumerate}
\end{theorem}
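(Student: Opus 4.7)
The plan is to take $\bfz$ to be the transcript of $Q_i$ on $(\bfx,\bfy)$, decomposed as $\bfz = (\bfz^{(1)},\bfz^{(2)},a)$, where $\bfz^{(1)}$ records the (address, value) pairs probed from $E(\bfx)$ during the first stage of the semi-adaptive query, $\bfz^{(2)}$ records the analogous pairs probed from $U_\bfy(E(\bfx))$ during the second stage, and $a$ is the answer bit placed last so that $\bfz_m = a$. With this definition, Property~1 is immediate from correctness. Property~3 is the cleanest of the four: semi-adaptivity forces $\bfz^{(1)}$ to be a function of $\bfx$ and the scheme's private randomness alone, hence independent of $\bfy$, so $I(\bfy:\bfz^{(1)}) = 0$ and the chain rule gives $I(\bfy:\bfz) \leq I(\bfy:\bfz^{(2)} a \mid \bfz^{(1)}) \leq H(\bfz^{(2)} a) \leq \tq w + 1$.

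For Properties~2 and~4, I would apply chain-rule manipulations that exploit both $\bfx_i \perp \bfy$ and $\bfz^{(1)} \perp \bfy$. Both properties reduce to controlling $I(\bfx_i : \bfz^{(2)}_i \mid \bfy, \bfz^{(1)}_i)$: Property~2 demands this be $o(\tq w)$ (once the clean term $I(\bfx_i : \bfz^{(1)}_i) \leq \tq w$ is added), while Property~4 demands it be $O(\tu n w / p)$. The starting point is that $\bfz^{(2)}_i$ is a deterministic function of the update output $U_\bfy(E(\bfx))$ together with $\bfy$, $\bfz^{(1)}_i$, and the scheme's randomness, so $I(\bfx_i : \bfz^{(2)}_i \mid \bfy, \bfz^{(1)}_i) \leq I\bigl(\bfx_i : U_\bfy(E(\bfx)) \mid \bfy, \bfz^{(1)}_i\bigr)$. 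Combined with mutual independence of $\bfx_1,\dots,\bfx_k$ and $H\bigl(U_\bfy(E(\bfx)) \mid \bfy\bigr) \leq \tu n w$, the direct-sum inequality $\sum_{j=1}^k I\bigl(\bfx_j : U_\bfy(E(\bfx)) \mid \bfy\bigr) \leq \tu n w$ yields an average per-index bound of $\tu n w / k$.

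The main obstacle and technical crux is obtaining the sharper $O(\tu n w / p)$ scaling in Property~4 for an arbitrary $p \leq o(k)$, while simultaneously holding Property~2 at $\tq w + o(\tq w)$ and selecting a single $i$ that meets Properties~1, 2, and~4 together. The strategy is an amortization argument over a random $p$-subset of $[k]$ followed by a Markov-style simultaneous selection. The delicate point is that conditioning on $\bfz^{(1)}_i$ can spuriously inflate $I\bigl(\bfx_i : U_\bfy(E(\bfx)) \mid \bfy, \bfz^{(1)}_i\bigr)$ by up to $H(\bfz^{(1)}_i) \leq \tq w$ relative to the unconditioned information; this extra $\tq w$ cost has to be absorbed into the $o(\tq w)$ slack of Property~2, and the restriction $p \leq o(k)$ is precisely what ensures the sampling is small enough for this absorption to go through.
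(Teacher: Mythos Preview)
The paper does not supply its own proof of this statement: it is attributed to Ko and Weinstein, with the remark that their argument carries over unchanged to partial $f$, distinct marginals $\mu_1,\mu_2$, and randomized schemes. The only structural information the paper gives is that ``the random variable $\bfz$ consists of some $\bfx_j$'s and transcripts of query phases of $D$ for some $j \in [k]$.'' So there is no detailed proof in the paper to compare against, only this hint about the shape of $\bfz$.

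Your setup handles Properties~1 and~3 correctly, and the reduction of Properties~2 and~4 to the single quantity $I(\bfx_i : \bfz^{(2)}_i \mid \bfy, \bfz^{(1)}_i)$ is sound. The gap is in how you obtain the $O(\tu n w / p)$ bound for Property~4. With your $\bfz$ equal to the single query-$i$ transcript, nothing in the construction depends on $p$, so no bound you derive from it can either: your direct-sum gives $\tu n w / k$ on average over $i$ (without the $\bfz^{(1)}_i$ conditioning), and after accounting for that conditioning you get at best $\tu n w / k + O(\tq w)$. Your closing sentence that this extra $\tq w$ ``has to be absorbed into the $o(\tq w)$ slack of Property~2'' conflates two different quantities --- the overhead sits in the Property~4 bound, not the Property~2 bound, and nothing in your sketch moves it from one to the other.

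What is missing, and what the paper's hint points to, is that $\bfz$ must itself be built from the $p$-subset. In the Ko--Weinstein construction one fixes $S \subseteq [k]$ of size $p$, takes $i \in S$, and lets $\bfz$ carry --- in addition to the query-$i$ transcript --- the actual inputs $\bfx_j$ for the other $j \in S$. Including these $\bfx_j$'s costs nothing in Property~2, since they are independent of $\bfx_i$; but it is exactly what lets the chain rule run over $S$ rather than over all of $[k]$: from
\[
\sum_{j \in S} I\bigl(\bfx_j : U_\bfy(E(\bfx)) \,\big|\, \bfy,\, \bfx_{S \cap [j-1]}\bigr) \;=\; I\bigl(\bfx_S : U_\bfy(E(\bfx)) \,\big|\, \bfy\bigr) \;\le\; \tu \cdot n \cdot w
\]
one obtains an average of $\tu n w / p$ per index in $S$, with the conditioning already in the form needed for Property~4. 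The $o(\tq w)$ slack in Property~2 and the role of the hypothesis $p \le o(k)$ then enter when bounding how much the extra material placed in $\bfz$ can leak about $\bfx_i$. Your phrase ``amortization argument over a random $p$-subset'' gestures at this construction, but without actually enlarging $\bfz$ to contain the $\bfx_j$'s the argument does not close.
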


Above, $I$ stands for mutual information between random variables, see Section \ref{sec:information-theory} for the definition. 
The random variable $\bfz$ consists of some $\bfx_j$'s and transcripts of query phases of $D$ for some $j \in [k]$.
The theorem can be interpreted as saying that the last bit of $\bfz$ predicts $f(\bfx_i, \bfy)$, $\bfz$ has little information about $\bfx_i$ and $\bfy$, and the triple $(\bfx_i, \bfy, \bfz)$ is ``protocol-like'', in the sense that $\bfx_i$ and $\bfy$ are close to being independent given $\bfz$.
Ko and Weinstein~\cite{ko19} proved Theorem~\ref{thm:rand_var} for the deterministic schemes for the $\Disj$ function and in the case where $\mu_1 = \mu_2$. 
However, their proof actually works for any (partial) function $f$ and for any two, possibly distinct distributions $\mu_1$ and $\mu_2$.
Moreover, their proof also works for randomized schemes. The resulting statement for randomized schemes for any function $f$ is what we have given above. 
To complete the proof of their lower bound, Ko and Weinstein proved that if we set $p$ (and $k$) large enough so that $I\bigl(\bfx_i : \bfy \mid \bfz\bigr) \leq o(1)$ then such random variable $\bf{z}$ cannot exist when $f$ is the $\Disj$ function.
It is this second step which we generalize.

Let $f: X \times Y \to \{0,1\}$ be a function and $\mu$ be a distribution over $X \times Y$.
A set $R \subseteq X \times Y$ is a \emph{rectangle} if there exist sets $A \subseteq X$ and $B \subseteq Y$ such that $R = A \times B$.
\label{b-monochromatic}For $b \in \{0,1\}$ and $0 \leq \rho \leq 1$, we say the rectangle $R$ is \emph{$\rho$-error $b$-monochromatic for $f$ under $\mu$} if $\mu\bigl(R \cap f^{-1}(1-b)\bigl) \leq \rho \cdot \mu\bigl(R\bigr)$.
We say the distribution $\mu$ is a \emph{product distribution} if there are two independent distribution $\mu_1$ over $X$ and $\mu_2$ over $Y$ such that $\mu(x,y) = \mu_1(x) \times \mu_2(y)$.
For $0 \leq \alpha \leq \frac{1}{2}$, the distribution $\mu$ is $\alpha$-balanced according to $f$ if $\mu\bigl(f^{-1}(0)\bigr), \mu\bigl(f^{-1}(1)\bigr) \geq \alpha$.
We will prove that the existence of a random variable $\bfz$ given by Theorem~\ref{thm:rand_var} implies that, for any $b \in \{0,1\}$, any balanced product distribution $\mu$ and any function $g$ which is ``close'' to $f$, there is a large (according to $\mu$) $\rho$-error $b$-monochromatic rectangle for $g$ in terms of $\tq$.
This technique is known as smooth corruption bound~\cite{beame06, chakrabarti12} or smooth rectangle bound~\cite{klauck10}.
We denote the smooth corruption bound of $f$ as $\scb^{\rho,\lambda}_\mu$.
Informally, $\scb^{\rho,\lambda}_\mu(f) \geq s$ if there is $b \in \{0,1\}$ and a partial function $g: X \times Y \to \{0,1, *\}$ which is close\footnote{``Closeness'' is measured by the parameter $\lambda \in \R$, see Section \ref{sec:lambda} for the formal definition.} to $f$ such that any $\rho$-error $b$-monochromatic rectangle $R \subseteq X \times Y$ for $g$ has size (under $\mu$) at most $2^{-s}$.
We will define smooth corruption bound formally  in the next section.
Thus, if we use Theorem~\ref{thm:rand_var} as a black box we generalize Theorem~\ref{thm:lb_ds} for any function of large corruption bound.

%
%
%

\begin{theorem}[Main Result]
\label{thm:MainResult}
 Let $\lambda, \tilde{\varepsilon},\tilde{\alpha} \geq 0$ such that $\alpha \geq 2\varepsilon$ for $\varepsilon = \tilde\varepsilon + \lambda, \alpha = \tilde\alpha - \lambda$.
 Let $\mu$ be a product distribution over $\{0,1\}^n \times \{0,1\}^n$ such that $\mu$ is $\tilde\alpha$-balanced according to a partial function $f: \{0,1\}^n \times \{0,1\}^n \to \{0,1, *\}$.
 Any semi-adaptive randomized scheme for the multiphase problem of $f$, with error probability bounded by $\tilde\varepsilon$, must have either $\tu \cdot n \geq \Omega\bigl(k/w\bigr)$, or 
 \[
  \tq\cdot w \geq \Omega \left( \alpha \cdot \scb^{O(\varepsilon/\alpha),\lambda}_\mu(f) \right).
 \]
\end{theorem}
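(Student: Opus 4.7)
The plan is to apply Theorem~\ref{thm:rand_var} and extract a large, nearly monochromatic rectangle directly from the resulting protocol-like random variables $(\bfx_i,\bfy,\bfz)$, witnessing the promised upper bound on $\scb^{O(\varepsilon/\alpha),\lambda}_\mu(f)$ in terms of $\tq\cdot w$. This information-to-rectangle extraction is the paper's main technical novelty, bypassing the cut-and-paste approach of \cite{ko19}.

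Assume toward contradiction that $\tu\cdot n = o(k/w)$ while $\tq\cdot w$ is below the claimed bound. Apply Theorem~\ref{thm:rand_var} to i.i.d.\ samples $\bfx_1,\dots,\bfx_k\sim\mu_1$ and $\bfy\sim\mu_2$ with $p$ growing with $k$ but still $o(k)$, so that $I(\bfx_i:\bfy\mid\bfz)\leq o(1)$. Since $\bfx_i\perp\bfy$, the chain rule gives the exact identity
\[
  I(\bfx_i\bfy:\bfz) = I(\bfx_i:\bfy\bfz) + I(\bfy:\bfz) \leq O(\tq\cdot w),
\]
together with the prediction guarantee $\Pr[f(\bfx_i,\bfy)\neq{*},\ \bfz_m\neq f(\bfx_i,\bfy)]\leq\tilde\varepsilon$.

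Because $\mu$ is $\tilde\alpha$-balanced and the scheme errs with probability at most $\tilde\varepsilon\leq\tilde\alpha/2$, some $b\in\{0,1\}$ satisfies $\Pr[\bfz_m=b,\ f(\bfx_i,\bfy)=b]\geq\Omega(\tilde\alpha)$. Writing $\mu_z=P_{\bfx_i\bfy\mid\bfz=z}$ with marginals $\mu_{j\mid z}$, the identity
\[
  D(\mu_z\Vert\mu) = I(\bfx_i:\bfy\mid\bfz=z) + D(\mu_{1\mid z}\Vert\mu_1) + D(\mu_{2\mid z}\Vert\mu_2)
\]
averages over $z$ to $I(\bfx_i\bfy:\bfz)\leq O(\tq\cdot w)$, so a Markov argument restricted to the event $\{\bfz_m=b,\ f(\bfx_i,\bfy)=b\}$ selects a value $z$ with $D(\mu_{j\mid z}\Vert\mu_j)\leq O(\tq\cdot w/\tilde\alpha)$ for $j=1,2$, $I(\bfx_i:\bfy\mid\bfz=z)\leq o(1)$, and $\Pr[f(\bfx_i,\bfy)\neq b\mid\bfz=z]\leq O(\tilde\varepsilon/\tilde\alpha)$. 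Taking $A_z\subseteq\supp(\mu_{1\mid z})$ and $B_z\subseteq\supp(\mu_{2\mid z})$ to be ``most-of-the-mass'' subsets and $R_z=A_z\times B_z$, the data-processing inequality applied to the indicators of $A_z,B_z$ yields $\mu_1(A_z),\mu_2(B_z)\geq 2^{-O(\tq\cdot w/\tilde\alpha)}$, hence $\mu(R_z)\geq 2^{-O(\tq\cdot w/\alpha)}$.

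The main remaining obstacle is converting the error bound under $\mu_z$ into one under $\mu$: a priori, a point in $R_z\cap f^{-1}(1-b)$ with large ratio $\mu(x,y)/\mu_z(x,y)$ could inflate $\mu(R_z\cap f^{-1}(1-b))$ even though its $\mu_z$-weight is small. Here the $\lambda$ slack of the smooth corruption bound becomes essential: low-$\mu_z$-mass defects in $R_z$ can be absorbed into a partial function $g$ that is $\lambda$-close to $f$ by setting $g={*}$ on them. Pinsker applied to $I(\bfx_i:\bfy\mid\bfz=z)\leq o(1)$ further shows that $\mu_z$ is within $o(1)$ total variation of the product $\mu_{1\mid z}\times\mu_{2\mid z}$, so $R_z$ is an essentially clean product rectangle for $\mu_z$. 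Balancing these approximations against $\tilde\varepsilon,\lambda,\tilde\alpha$ via the hypothesis $\alpha\geq 2\varepsilon$ then shows that $R_z$ is $O(\varepsilon/\alpha)$-error $b$-monochromatic for some $g$ that is $\lambda$-close to $f$, witnessing $\scb^{O(\varepsilon/\alpha),\lambda}_\mu(f)\leq O(\tq\cdot w/\alpha)$ and yielding the claimed lower bound.
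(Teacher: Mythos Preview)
Your high-level plan matches the paper's (apply Theorem~\ref{thm:rand_var}, condition on a good $z$, use Pinsker to pass to the product $\mu_{1\mid z}\times\mu_{2\mid z}$, and extract a rectangle), but there is a genuine gap in how you handle the nearby function $g$ and the parameter $\lambda$.

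Recall that $\scb^{\rho,\lambda}_\mu(f) = \max_{g\ \lambda\text{-close to }f}\cb^\rho_\mu(g)$ and $\cb^\rho_\mu(g) = \log\bigl(1/\mono^\rho_\mu(g)\bigr)$ with $\mono^\rho_\mu(g) = \min_b \max_R \mu(R)$. Hence to \emph{upper}-bound $\scb^{\rho,\lambda}_\mu(f)$ you must exhibit, for \emph{every} $g$ that is $\lambda$-close to $f$ and for \emph{every} $b\in\{0,1\}$, a large $\rho$-error $b$-monochromatic rectangle for $g$. You instead propose to \emph{construct} a convenient $g$ by setting $g=\ast$ on the low-$\mu_z$-mass defects of $R_z$; this produces a large nearly-monochromatic rectangle for \emph{your} particular $g$, but says nothing about the worst $g$ in the maximum, and therefore does not bound $\scb$. (Similarly, ``some $b$'' is not enough; one must run the argument for an arbitrary $b$, which is fine because $\alpha$-balancedness plus the error bound give $\Pr[\bfz_m=b]\ge\alpha-\varepsilon$ for both $b$.) In the paper, $\lambda$ is not used as slack to absorb errors at all: one fixes an \emph{arbitrary} $g$ that is $\lambda$-close to $f$, notes that the predictor $\bfz_m$ still errs on $g$ with probability at most $\varepsilon=\tilde\varepsilon+\lambda$ and that $\mu$ is $(\tilde\alpha-\lambda)$-balanced for $g$ (Observation~\ref{obs:CloseFunction}), and from then on works entirely with $g$.

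The second gap is exactly the ``main obstacle'' you correctly identify---transferring the error bound from $\mu_z$ (or its product approximation $\mu'_z=\mu_{1\mid z}\times\mu_{2\mid z}$) back to the original product $\mu$---but your proposed fix collapses for the reason above. The paper solves this not with $\lambda$-slack but with a \emph{bucketing} argument: partition $\supp(\mu_{1\mid z})$ into buckets $C^\rmx_\ell=\{x:\ (\ell-1)/2^{c'}<\mu'_z(x)/\mu(x)\le \ell/2^{c'}\}$ and similarly $C^\rmy_\ell$. On each product bucket $C^\rmx_{\ell_1}\times C^\rmy_{\ell_2}$ the density $d\mu'_z/d\mu$ is constant up to a factor $\frac{\ell_1\ell_2}{(\ell_1-1)(\ell_2-1)}\le 4$, so the \emph{relative} error $\mu(R\cap g^{-1}(1-b))/\mu(R)$ is within a constant of the relative error under $\mu'_z$. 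A Markov-plus-union-bound step then selects $\ell_1,\ell_2\in(1,2^{O(c')}]$ so that the bucket simultaneously has $\mu'_z$-mass $\ge 2^{-O(c')}$ and conditional error $O(\varepsilon')$, yielding a rectangle that is both $\mu$-large and $O(\varepsilon/\alpha)$-error $b$-monochromatic for $g$. Your ``most-of-the-mass $A_z,B_z$ plus data-processing on indicators'' recovers large $\mu$-mass but gives no pointwise control of the ratio $\mu/\mu'_z$ inside $A_z\times B_z$, so the error transfer does not go through without the bucketing.
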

We point out that $\Omega$ and $O$ in the bound given above hide absolute constants independent of $\alpha, \varepsilon$ and $\lambda$.

As a consequence of our main result, and of previously-known bounds on corruption, we are able to show new lower-bounds of $\tq = \Omega(\frac{n}{w})$ against semi-adaptive schemes for the multiphase problem of the Inner Product, Gap Orthogonality and Gap Hamming Distance functions (where the gap is $\sqrt{n}$). These lower-bounds hold assuming that $\tu = o(\frac{k}{w n})$. They follow from the small discrepancy of Inner Product, and from a bound shown by Sherstov on the corruption of Gap Orthogonality, followed by a reduction to Gap Hamming Distance \cite{sherstov11}. This result also gives an alternative proof of the same lower-bound proven by Ko and Weinstein \cite{ko19}, for the Disjointness function, of $\tq = \Omega(\frac{\sqrt n}{w})$. This follows from the bound on corruption of Disjointness under a product distribution, shown by Babai et al.~\cite{babai86}.

The paper is organized as follows. In Section \ref{sec:preliminaries} we give important notation, and the basic definitions from information theory and communication complexity. The proof of Theorem \ref{thm:MainResult} appears in Section \ref{sec:proof}. The various applications appear in Section \ref{sec:App}.

\section{Preliminaries}\label{sec:preliminaries}

We use a notational scheme where sets are denoted by uppercase letters, such as $X$ and $Y$, elements of the sets are denoted by the same lowercase letters, such as $x \in X$ and $y \in Y$, and random variables are denoted by the same lowercase boldface letters, such as $\bfx$ and $\bfy$. We will use lowercase greek letters, such as $\mu$, to denote distributions. If $\mu$ is a distribution over a product set, such as $X\times Y\times Z$, and $(x,y,z) \in X\times Y\times Z$, then $\mu(x,y,z)$ is the probability of seeing $(x,y,z)$ under $\mu$. We will sometimes denote $\mu$ by $\mu(\rmx, \rmy, \rmz)$, using non-italicized lowercase letters corresponding to $X \times Y \times Z$. This allows us to to use the notation $\mu(\rmx)$ and $\mu(\rmy)$ to denote the $\rmx$ and $\rmy$-marginals of $\mu$, for example; then if we use the same notation with italicized lowercase letters, we get the marginal probabilities, i.e., for each $x \in X$ and $y \in Y$
\[
  \mu(x) = \sum_{y,z} \mu(x,y,z) \qquad \mu(y) = \sum_{x,z} \mu(x,y,z).
\]
If $y \in Y$, then we will also use the notation $\mu(\rmx \mid y)$ to denote the $\rmx$-marginal of $\mu$ conditioned seeing the specific value $y$. Then for each $x \in X$ and $y \in Y$, we have
\[
  \mu(x\mid y) = \sum_z \mu(x,y,z).
\]
We will also write $(\bfx,\bfy,\bfz) \sim \mu$ to mean that $(\bfx,\bfy,\bfz)$ are random variables chosen according to the distribution $\mu(\rmx,\rmy,\rmz)$, i.e., for all $(x,y,z)\in X\times Y\times Z$, $\Pr[\bfx = x, \bfy = y, \bfz = z] = \mu(x,y,z)$. Naturally if $A \subseteq X\times Y\times Z$, then $\mu(A) = \sum_{(x,y,z)\in A} \mu(x,y,z)$. We let $\supp(\mu)$ denote the \emph{support} of $\mu$, i.e., the set of $(x,y,z)$ with $\mu(x,y,z) > 0$.

We now formally define the smooth corruption bound and related measures from communication complexity, and refer the book by Kushilevitz and Nisan~\cite{kushilevitz96} for more details.
At the end of this section we provide necessary notions of information theory which are used in the paper, and for more details on these we refer to the book by Cover and Thomas~\cite{cover06}.


\subsection{Rectangle Measures}\label{sec:lambda}
Let $f: X \times Y \to \{0,1, *\}$ be a partial function, where $f(x,y) = \ast$ means $f$ is not defined on $(x,y)$. Let $\mu(\rmx,\rmy)$ be a distribution over $X \times Y$.
We say that $f$ is $\lambda$-close to a partial function $g: X \times Y \to \{0,1,*\}$ under $\mu$ if 
\[
\Pr_{(x,y) \sim \mu} \bigl[f(x,y) \neq g(x,y)\bigr] \leq \lambda. 
\]
For $b \in \{0,1\}$, $\rho \in [0,1]$, let
\[
  {\cal R}^{\rho,b}_\mu(f) = \bigl\{ R \subseteq X\times Y \text{ rectangle} \mid \mu\bigl(R \cap f^{-1}(1-b)\bigl) \leq \rho\cdot \mu\bigl(R\bigr)\bigr\}
\]
be the set of $\rho$-error $b$-monochromatic rectangles for $f$ under $\mu$.
The complexity measure $\mono$ quantifies how large almost $b$-monochromatic rectangles can be for both $b \in \{0,1\}$:
\[
 \mono^\rho_\mu(f) = \min_{b \in \{0,1\}} \max_{R \in {\cal R}^{\rho,b}_\mu(f)} \mu(R)
\]
Using $\mono$ we can define the \emph{corruption bound} of a function as $\cb^\rho_\mu(f) = \log \frac{1}{\mono^\rho_\mu(f)}$ and the \emph{smooth corruption bound} as
\[
\scb^{\rho,\lambda}_\mu(f) = \max_{g: \text{ $\lambda$-close to $f$ under $\mu$} } \cb^\rho_\mu(g).
\]
Thus, if $\scb^{\rho,\lambda}_\mu(f) \geq s$ then there is a $b \in \{0,1\}$ and a function $g$ which $\lambda$-close to $f$ under $\mu$ such that for any $\rho$-error $b$-monochromatic rectangle for $g$ under $\mu$ it holds that $\mu(R) \leq 2^{-s}$.

\begin{remark*}
  In Razborov's paper where an $\Omega(n)$ lower-bound for disjointness is first proven \cite{razborov92}, the (implicitly given) definition of a $\rho$-error $b$-monochromatic rectangle is $\mu(R\cap f^{-1}(1-b)) \le \rho \cdot \mu(R \cap f^{-1}(b))$. Later, a strong direct product theorem for corruption (under product distributions) was proven by Beame et al.~\cite{beame06}, which uses instead the condition that $\mu(R \setminus f^{-1}(b)) \le \rho \cdot \mu(R)$. The definition we present above comes from \cite{sherstov11}, where the condition is (we repeat it here) that $\mu(R \cap f^{-1}(1-b)) \leq \rho\cdot \mu(R)$. So we have three different definitions of $\rho$-error $b$-monochromatic rectangle, and thus three different corruption bounds. Now, if the distribution $\mu$ is supported on the domain of $f$, all these three definitions result in (roughly) equivalent complexity measures. But if $\mu$ attributes some mass to inputs where $f$ is undefined (which is sometimes useful if $\mu$ is a product distribution, as in our case), then the definitions are no longer equivalent. Our lower-bound will hold for any of the definitions, but the proof is somewhat simpler for the definition used in Sherstov's paper \cite{sherstov11}, which is the only corruption-based lower-bound we use, where $\mu$ attributes mass to undefined inputs.
\end{remark*}

The notion $\mono^\rho_\mu$ is related to the \emph{discrepancy} of a function:
\[
 \disc_\mu(f) = \max_{\text{$R:$ rectangle of $X \times Y$ }} \Bigl| \mu\bigl(R \cap f^{-1}(0)\bigr) - \mu\bigl(R \cap f^{-1}(1)\bigr)\Bigr|.
\]
It is easy to see that for a total function $f$ holds that $\disc_\mu(f) \geq (1-2\rho)\cdot \mono^\rho_\mu(f)$ for any $\rho$.
Thus, Theorem~\ref{thm:MainResult} will give us lower bounds also for functions of small discrepancy.

\subsection{Information Theory}\label{sec:information-theory}
We define several measures from information theory.
If $\mu'(\rmz), \mu(\rmz)$ are two distributions such that $\supp(\mu') \subseteq \supp(\mu)$, then the \emph{Kullback-Leibler divergence} of $\mu'$ from $\mu$ is
 \[
  \DKL \bigl(\mu'~ \|~ \mu\bigr) = \sum_{z} \mu'(z) \log \frac{\mu'(z)}{\mu(z)}.
 \]

 With Kullback-Leibler divergence we can define the mutual information, which measures how close (according to KL divergence) is a joint distribution to the product of its marginals.
 If we have two random variables $(\bfx,\bfy) \sim \mu(\rmx, \rmy)$, then we define their \emph{mutual information} to be
\[
  I\bigl({\bfx : \bfy}\bigr) = \DKL \bigl(\mu(\rmx, \rmy) ~\|~ \mu(\rmx) \times \mu(\rmy)\bigr) = \expec_{y \sim \mu(\rmy)} \Big[ \DKL\bigl(\mu(\rmx\mid y) ~\|~\mu(\rmx)\bigr) \Big].
\]
If we have three random variables $(\bfx, \bfy, \bfz) \sim \mu(\rmx, \rmy, \rmz)$, then the \emph{mutual information of $\bfx$ and $\bfy$ conditioned by $\bfz$} is
\[
  I\bigl(\bfx : \bfy \mid \bfz\bigr) = \expec_{z \sim \mu(\rmz)} \Big[ I\bigl(\bfx : \bfy \mid \bfz = z\bigr) \Big] = \expec_{z \sim \mu(\rmz)} \Big[ \DKL\Big( \mu(\rmx, \rmy \mid z) ~\|~ \mu(\rmx \mid z) \times \mu( \rmy \mid z)\Big) \Big]
\]
We present several facts about mutual information, the proofs can be found in the book of Cover and Thomas~\cite{cover06}.

\begin{fact}[Chain Rule]
 For any random variables $\bfx_1,\bfx_2,\bfy$ and $\bfz$ holds that
 \[
  I\bigl(\bfx_1 \bfx_2 : \bfy \mid \bfz\bigr) = I\bigl(\bfx_1 : \bfy \mid \bfz\bigr) + I\bigl(\bfx_2 : \bfy \mid \bfz, \bfx_1\bigr).
 \]
\end{fact}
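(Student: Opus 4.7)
The plan is to derive the identity directly from the paper's Kullback--Leibler definition of conditional mutual information. Unfolding the outer expectation over $\bfz$ and the inner $\DKL$, I have
\[
  I\bigl(\bfx_1\bfx_2 : \bfy \mid \bfz\bigr) \;=\; \sum_{x_1,x_2,y,z} \mu(x_1,x_2,y,z)\,\log \frac{\mu(x_1,x_2,y\mid z)}{\mu(x_1,x_2\mid z)\,\mu(y\mid z)},
\]
so the claim reduces to showing that the log-likelihood ratio inside the sum splits, after taking expectation, as the two target conditional mutual informations.

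The central algebraic step is the pointwise factorization
\[
  \frac{\mu(x_1,x_2,y\mid z)}{\mu(x_1,x_2\mid z)\,\mu(y\mid z)} \;=\; \frac{\mu(x_1,y\mid z)}{\mu(x_1\mid z)\,\mu(y\mid z)} \,\cdot\, \frac{\mu(x_2,y\mid x_1,z)}{\mu(x_2\mid x_1,z)\,\mu(y\mid x_1,z)},
\]
valid at every $(x_1,x_2,y,z) \in \supp(\mu)$. I would verify it by applying the definition of conditional probability $\mu(a,b\mid c) = \mu(a\mid c)\,\mu(b\mid a,c)$ twice to the left-hand numerator and denominator, cancelling the common factor $\mu(x_1\mid z)$, and then rewriting the leftover term $\mu(x_2\mid x_1,y,z) = \mu(x_2,y\mid x_1,z)/\mu(y\mid x_1,z)$ to produce the symmetric form on the right.

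Taking logarithms of this identity and then expectation with respect to $\mu$ turns the product into a sum of two terms. In the first term, marginalising over $x_2$ collapses $\mu(x_1,x_2,y,z)$ to $\mu(x_1,y,z)$, and unfolding the definition yields exactly $I(\bfx_1:\bfy\mid\bfz)$. For the second term, I would regroup the outer sum as an expectation over $(x_1,z)\sim\mu(\rmx_1,\rmz)$ of an inner sum over $(x_2,y)$ weighted by $\mu(x_2,y\mid x_1,z)$; that inner sum is precisely $\DKL\bigl(\mu(\rmx_2,\rmy\mid x_1,z)\,\|\,\mu(\rmx_2\mid x_1,z)\times\mu(\rmy\mid x_1,z)\bigr)$, and the outer expectation is therefore $I(\bfx_2:\bfy\mid\bfz,\bfx_1)$ by definition.

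There is no substantive obstacle: the proof is a purely algebraic rearrangement of conditional probabilities. The only bookkeeping is to restrict all sums to $\supp(\mu)$ so that every conditional appearing in the ratios is well-defined, which is automatic because each summand is weighted by $\mu(x_1,x_2,y,z)$ and thus vanishes outside the support.
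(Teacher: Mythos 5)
Your derivation is correct: the pointwise factorization of the likelihood ratio checks out (both sides reduce to $\mu(x_2,y\mid x_1,z)/\bigl(\mu(x_2\mid x_1,z)\,\mu(y\mid z)\bigr)$), and the two resulting expectations are exactly $I(\bfx_1:\bfy\mid\bfz)$ and $I(\bfx_2:\bfy\mid\bfz,\bfx_1)$ under the paper's definitions. The paper states this as a Fact and defers to Cover and Thomas rather than proving it; your argument is the standard textbook derivation, so there is nothing further to compare.
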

\noindent Since mutual information is never negative, we have the following corollary.
\begin{corollary}
 For any random variables $\bfx, \bfy$ and $\bfz$ holds that $I\bigl(\bfx : \bfy \bigr) \leq I\bigl(\bfx : \bfy\, \bfz\bigr)$.
\end{corollary}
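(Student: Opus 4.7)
The plan is to apply the Chain Rule stated just above, together with the symmetry of mutual information, and then invoke non-negativity of conditional mutual information (which the excerpt explicitly highlights in the sentence ``Since mutual information is never negative'').

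More concretely, I would first rewrite $I(\bfx : \bfy\,\bfz)$ as $I(\bfy\,\bfz : \bfx)$ using the symmetry $I(A:B)=I(B:A)$, which follows directly from the definition since $\mathsf{D}_{\mathsf{KL}}\bigl(\mu(\rmx,\rmy)\,\|\,\mu(\rmx)\times\mu(\rmy)\bigr)$ is symmetric in its two arguments. Then I would apply the Chain Rule with the substitutions $\bfx_1\leftarrow \bfy$, $\bfx_2\leftarrow \bfz$, the ``$\bfy$'' role played by $\bfx$, and the conditioning variable empty (or, if the Fact is stated only with a conditioning variable, taking $\bfz$ in the Fact to be a constant random variable, in which case conditional mutual information reduces to unconditional). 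This yields
\[
  I\bigl(\bfy\,\bfz : \bfx\bigr) \;=\; I\bigl(\bfy : \bfx\bigr) \;+\; I\bigl(\bfz : \bfx \mid \bfy\bigr).
\]
Applying symmetry again on each of the two right-hand terms converts the displayed equation into
\[
  I\bigl(\bfx : \bfy\,\bfz\bigr) \;=\; I\bigl(\bfx : \bfy\bigr) \;+\; I\bigl(\bfx : \bfz \mid \bfy\bigr).
\]

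To finish, I would note that the second term on the right is non-negative: for each fixed $y$ in the support of $\bfy$, $I(\bfx:\bfz\mid \bfy=y)$ is a Kullback--Leibler divergence, and $\mathsf{D}_{\mathsf{KL}}(\mu'\,\|\,\mu)\ge 0$ by Gibbs' inequality (a one-line consequence of Jensen's inequality applied to $-\log$). Taking the expectation over $y\sim\mu(\rmy)$ preserves non-negativity, so $I(\bfx:\bfz\mid\bfy)\ge 0$, and hence $I(\bfx:\bfy\,\bfz)\ge I(\bfx:\bfy)$, as claimed.

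There is essentially no obstacle here: the entire argument is bookkeeping on top of the Chain Rule and the non-negativity of KL divergence. The only minor point of care is the order in which one applies the Chain Rule; applying it as stated to $I(\bfx_1\bfx_2:\bfy\mid\bfz)$ with $\bfx_1\leftarrow\bfx,\,\bfx_2\leftarrow\bfx$ would split the wrong side of the colon, so the symmetry swap is what makes the one-line proof go through.
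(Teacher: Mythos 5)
Your proof is correct and matches the paper's approach: the paper derives the corollary immediately from the Chain Rule together with the non-negativity of (conditional) mutual information, exactly as you do. Your explicit handling of the symmetry swap is just the bookkeeping the paper leaves implicit.
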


\noindent The $\ell_1$-distance between two distributions is defined as
\[
\bigl\| \mu'(\rmz) - \mu(\rmz) \bigr\|_1 = \sum_{z} \bigl| \mu'(z) - \mu(z)\bigr|.
\]
There is a relation between $\ell_1$-distance and Kullback-Leibler divergence.
\begin{fact}[Pinsker's Inequality]
 For any two distributions $\mu'(\rmz)$ and $\mu(\rmz)$, we have
 \[
   \bigl\| \mu'(\rmz) - \mu(\rmz) \bigr\|_1 \leq \sqrt{2\cdot \DKL \bigl( \mu'(\rmz) ~\|~ \mu(\rmz)\bigr)}
 \]
\end{fact}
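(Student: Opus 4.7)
The plan is a two-step reduction to the binary (Bernoulli) case, which is the standard route to Pinsker's inequality and is the most convenient given the tools at hand.

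For the reduction step, I would let $A = \{z : \mu'(z) \ge \mu(z)\}$ and set $p := \mu'(A)$, $q := \mu(A)$. Since $\mu'$ and $\mu$ both have total mass $1$, the positive and negative contributions to $\mu' - \mu$ cancel, yielding the standard identity
\[
  \bigl\| \mu'(\rmz) - \mu(\rmz) \bigr\|_1 \;=\; 2\bigl(\mu'(A) - \mu(A)\bigr) \;=\; 2|p-q|.
\]
On the divergence side, the log-sum inequality applied separately to the blocks $A$ and $A^c$ gives
\[
  \DKL\bigl(\mu'(\rmz) ~\|~ \mu(\rmz)\bigr) \;\ge\; p \log \frac{p}{q} + (1-p)\log\frac{1-p}{1-q} \;=:\; d(p\|q).
\]
So it suffices to prove the binary form: $2(p-q)^2 \le d(p\|q)$ for all $p,q \in [0,1]$.

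To handle the binary form, I would fix $q$ and consider $g(p) := d(p\|q) - 2(p-q)^2$ on $[0,1]$. A short computation gives $g(q) = 0$ and $g'(q) = 0$, while $g''(p) = \frac{1}{p(1-p)} - 4 \ge 0$ since $p(1-p) \le 1/4$. Hence $g$ is convex with its unique minimum at $p = q$, where it equals zero, so $g \ge 0$ everywhere. Chaining the three inequalities produces
\[
  \bigl\|\mu'(\rmz) - \mu(\rmz)\bigr\|_1 \;=\; 2|p-q| \;\le\; \sqrt{2\, d(p\|q)} \;\le\; \sqrt{2\, \DKL\bigl(\mu'(\rmz) ~\|~ \mu(\rmz)\bigr)}.
\]

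The only mild obstacle is the handling of degenerate cases. If $\mu'(z) > 0$ while $\mu(z) = 0$ for some $z$, then $\DKL(\mu' \| \mu) = +\infty$ and the inequality is trivial; otherwise $\supp(\mu') \subseteq \supp(\mu)$, which is the precondition already assumed in the definition of $\DKL$ in the preliminaries. The boundary cases $p \in \{0,1\}$ or $q \in \{0,1\}$ in the binary step are then absorbed by the usual convention $0\log 0 = 0$ together with continuity of $g$ on the closed interval, so no additional care is required.
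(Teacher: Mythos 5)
Your proof is correct. Note, however, that the paper does not prove this statement at all: Pinsker's inequality is listed as a \emph{Fact}, with the reader referred to Cover and Thomas for proofs of the facts in that section. The argument you give --- reduce to the Bernoulli case via the partition $\{A, A^c\}$ and the log-sum inequality, then verify $d(p\|q) \ge 2(p-q)^2$ by showing $g(p) = d(p\|q) - 2(p-q)^2$ is convex with a zero critical point at $p=q$ --- is exactly the standard textbook proof, so you have supplied the omitted citation-level argument rather than a genuinely different route. One cosmetic remark: your computation $g''(p) = \tfrac{1}{p(1-p)} - 4$ implicitly takes $\log$ to be the natural logarithm; with the base-$2$ convention the paper presumably intends, the second derivative is $\tfrac{1}{p(1-p)\ln 2} - 4$, which is still nonnegative, so the inequality only gets easier and your conclusion stands. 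The handling of the degenerate support cases is also consistent with the paper's definition of $\DKL$, which already presupposes $\supp(\mu') \subseteq \supp(\mu)$.
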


\section{The Proof of Theorem~\ref{thm:MainResult}}\label{sec:proof}
Let $f : \{0,1\}^n \times \{0,1\}^n \to \{0,1, *\}$ be a partial function.
Suppose there is a semi-adaptive random scheme $D$ for the multiphase problem of $f$ with error probability bounded by $\tilde{\varepsilon}$ such that $\tu \cdot n \leq o\bigl(k/w\bigr)$.
Let $\mu(\rmx, \rmy) = \mu_1(\rmx) \times \mu_2(\rmy)$ be a product distribution over $\{0,1\}^n \times \{0,1\}^n$, such that $\mu(\rmx, \rmy)$ is $\tilde{\alpha}$-balanced according to $f$.
Let $b \in \{0,1\}$ and $g: \{0,1\}^n \times \{0,1\}^n \to \{0,1, *\}$ be a partial function which is $\lambda$-close to $f$ under $\mu$.
We will prove there is a large almost $b$-monochromatic rectangle for $g$.

Let $\bfx_1,\dots,\bfx_k$ be independent random variables each of them distributed according to $\mu_1$ and $\bfy$ be an independent random variable distributed according to $\mu_2$.
Let the random variable $\bfz \in \{0,1\}^m$ and the index $i \in [k]$ be given by Theorem~\ref{thm:rand_var} applied to the random variables $\bfx_1,\dots,\bfx_k,\bfy$ and the function $f$.
For simplicity we denote $\bfx = \bfx_i$.

We will denote the joint distribution of $(\bfx_1, \ldots, \bfx_k, \bfy, \bfz)$ by $\mu(\rmx_1, \ldots, \rmx_k, \rmy, \rmz)$. Note that here the notation is consistent, in the sense that $\mu(x_i, y) = \mu_1(x_i) \times \mu_2(y)$ for all $i \in [k], x, y \in \{0,1\}^n$. We will then need to keep in mind that $\mu(\rmz)$ is the $\rmz$-marginal of the joint distribution of $(\bfx_1, \ldots, \bfx_k, \bfy, \bfz)$.

By $f(\bfx, \bfy) \neq^* \bfz_m$ we denote the event that the random variable $\bfz_m$ gives us the wrong answer on an input from the support of $f$, i.e. $f(\bfx, \bfy) \neq *$ and $f(\bfx, \bfy) \neq \bfz_m$ hold simultaneously.
By Theorem~\ref{thm:rand_var} we know that $\Pr\bigl[f(\bfx, \bfy) \neq^* \bfz_m\bigr] \leq \tilde{\varepsilon}.$
Since $f$ and $g$ are $\lambda$-close under $\mu$, we have that $\mu$ is still balanced according to $g$ and $g(\bfx, \bfy) \neq^* \bfz_m$ with small probability, as stated in the next observation.

\begin{observation}
\label{obs:CloseFunction}
  Let $\alpha = \tilde{\alpha} - \lambda$ and $\varepsilon = \tilde{\varepsilon} + \lambda$.
 For the function $g$ it holds that
 \begin{enumerate}
  \item The distribution $\mu(\rmx,\rmy)$ is $\alpha$-balanced according to $g$.
  \item $\Pr\bigl[g(\bfx, \bfy) \neq^* \bfz_m\bigr] \leq \varepsilon$.
 \end{enumerate}
\end{observation}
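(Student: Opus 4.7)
The observation has two parts, both of which are essentially bookkeeping exercises that translate the $\lambda$-closeness between $f$ and $g$ into quantitative statements about $g$. I expect no real obstacle here; the plan is simply to carry out the two estimates cleanly.

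For part 1, I would argue that for each $b \in \{0,1\}$, the set $g^{-1}(b)$ contains all points $(x,y) \in f^{-1}(b)$ on which $f$ and $g$ agree. Since the set of disagreement has $\mu$-measure at most $\lambda$ by the definition of $\lambda$-closeness, we get
\[
  \mu\bigl(g^{-1}(b)\bigr) \;\geq\; \mu\bigl(f^{-1}(b)\bigr) - \mu\bigl(\{(x,y) : f(x,y) \neq g(x,y)\}\bigr) \;\geq\; \tilde\alpha - \lambda \;=\; \alpha,
\]
which is exactly what it means for $\mu$ to be $\alpha$-balanced according to $g$.

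For part 2, I would split the event $\{g(\bfx,\bfy) \neq^* \bfz_m\}$ according to whether $f$ and $g$ agree at $(\bfx,\bfy)$ or not. By a union bound,
\[
  \Pr\bigl[g(\bfx,\bfy) \neq^* \bfz_m\bigr] \;\leq\; \Pr\bigl[f(\bfx,\bfy)\neq g(\bfx,\bfy)\bigr] + \Pr\bigl[f(\bfx,\bfy) = g(\bfx,\bfy) \text{ and } g(\bfx,\bfy) \neq^* \bfz_m\bigr].
\]
Since $(\bfx,\bfy) \sim \mu$, the first term is at most $\lambda$ by $\lambda$-closeness. For the second term, I would note that if $f(\bfx,\bfy) = g(\bfx,\bfy)$, then $g(\bfx,\bfy) \neq *$ together with $g(\bfx,\bfy)\neq \bfz_m$ implies $f(\bfx,\bfy)\neq *$ and $f(\bfx,\bfy)\neq\bfz_m$, i.e., $f(\bfx,\bfy) \neq^* \bfz_m$. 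Hence this term is bounded by $\Pr[f(\bfx,\bfy) \neq^* \bfz_m]$, which by Theorem~\ref{thm:rand_var} is at most $\tilde\varepsilon$. Summing the two contributions yields the claimed bound $\lambda + \tilde\varepsilon = \varepsilon$.

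The only thing to be careful about is that the distribution of $(\bfx,\bfy) = (\bfx_i,\bfy)$ really is $\mu$, so that we may apply the $\lambda$-closeness bound directly; this is immediate from the setup, where $\bfx_i \sim \mu_1$ and $\bfy \sim \mu_2$ independently, and $\mu = \mu_1 \times \mu_2$. No other subtlety arises, and the observation follows.
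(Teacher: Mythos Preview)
Your proof is correct and follows essentially the same approach as the paper: both parts split the relevant event according to whether $f$ and $g$ agree at $(\bfx,\bfy)$, bound the disagreement part by $\lambda$, and bound the agreement part by the corresponding quantity for $f$. Your added remark verifying that $(\bfx,\bfy)\sim\mu$ is a useful sanity check that the paper leaves implicit.
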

\begin{proof}
 Let $b' \in \{0,1\}$.
 We will bound $\mu\bigl(g^{-1}(b')\bigr)$.
 \begin{align*}
  \tilde{\alpha} \leq \Pr\bigl[f(\bfx, \bfy) = b'\bigr] = &\Pr\bigl[f(\bfx, \bfy) = b', f(\bfx, \bfy) = g(\bfx, \bfy)\bigr] \\
  &+ \Pr\bigl[f(\bfx, \bfy) = b', f(\bfx, \bfy) \neq g(\bfx, \bfy)\bigr] \\
  \leq &\Pr\bigl[g(\bfx, \bfy) = b'\bigr] + \lambda.
 \end{align*}
 Thus, by rearranging we get $\mu\bigl(g^{-1}(b')\bigr) \geq \tilde{\alpha} - \lambda = \alpha$.
 The proof of the second bound is similar:
 \begin{align*}
  \Pr\bigl[g(\bfx, \bfy) \neq^* \bfz_m\bigr] = &\Pr\bigl[f(\bfx, \bfy) \neq^* \bfz_m, f(\bfx, \bfy) = g(\bfx, \bfy)\bigr] \\
  &+ \Pr\bigl[g(\bfx, \bfy) \neq^* \bfz_m, f(\bfx, \bfy) \neq g(\bfx, \bfy)\bigr]   \leq \tilde{\varepsilon} + \lambda = \varepsilon. \qedhere
 \end{align*}
\end{proof}

Let $c$ be the bound on $I\bigl(\bfx : \bfy\,\bfz \bigr)$ and $I\bigl(\bf\bfy : \bfz\bigr)$ given by Theorem~\ref{thm:rand_var}.
Since $I\bigl(\bfx : \bfz\bigr) \leq I\bigl(\bfx : \bfy\, \bfz\bigr)$, we have
$ I\bigl(\bfx : \bfz\bigr), I\bigl(\bfy : \bfz\bigr) \leq \tq\cdot w + o(\tq\cdot w) = c.$
We will prove that if we assume that $\tu \cdot n < o\bigl(k/w\bigr)$ and we choose $p$ large enough ($p$ of Theorem \ref{thm:rand_var}) then we can find a rectangle $R \subseteq X\times Y$ such that $R$ is $O\bigl(\varepsilon/\alpha\bigr)$-error $b$-monochromatic for $g$ and $\mu(R) \geq \frac{1}{2^{c'}}$ for $c' = O\bigl(\frac{\tq \cdot w}{\alpha}\bigr)$.
Thus, we have $\mono^{O(\varepsilon/\alpha)}_\mu(g) \geq 2^{-c'}$ and consequently 
\[
 \scb^{O(\varepsilon/\alpha),\lambda}_\mu(f) \leq O\left(\frac{\tq\cdot w}{\alpha}\right).
\]
By rearranging, we get the bound of Theorem~\ref{thm:MainResult}.

Let us sketch the proof of how we can find such a rectangle $R$.
We will first fix the random variable $\bfz$ to $z$ such that $\bfx$ and $\bfy$ are not very correlated conditioned on $\bfz = z$, i.e., the joint distribution $\mu(\rmx, \rmy \mid z)$ is very similar to the product distribution of the marginals $\mu(\rmx \mid z) \times \mu(\rmy \mid z)$.
Moreover, we will pick $z$ in such a way the probability of error $\Pr\bigl[g(\bfx, \bfy) \neq^* z_m | \bfz = z\bigr]$ is still small.
Then, since $\mu(\rmx, \rmy \mid z)$ is close to $\mu(\rmx \mid z) \times \mu(\rmy \mid z)$, the probability of error under the latter distribution will be small as well, i.e., if $(\bfx', \bfy') \sim \mu(\rmx \mid z) \times \mu(\rmy \mid z)$, then $\Pr\bigl[g(\bfx', \bfy') \neq^* z_m \bigr]$ will also be small.
Finally, we will find subsets $A \subseteq \supp\bigl(\mu(\rmx \mid z)\bigr), B \subseteq \supp\bigl(\mu(\rmy \mid z)\bigr)$ of large mass (under the original distributions $\mu_1$ and $\mu_2$), while keeping the probability of error on the rectangle $R = A \times B$ sufficiently small.

Let us then proceed to implement this plan. Let $\beta = \alpha - \varepsilon$. We will show that $\beta$ is a lower bound for the probability that $\bfz_m$ is equal to $b$. Let $\gamma$ be the bound on $I\bigl(\bfx : \bfy \mid \bfz\bigr)$ given by Theorem~\ref{thm:rand_var}, i.e., $I\bigl(\bfx_i : \bfy \mid \bfz\bigr) \leq \gamma = O\left(\frac{\tu \cdot n \cdot w}{p}\right)$.

\begin{lemma}
\label{lem:GoodSet}
 There exists $z \in Z$ such that
 \begin{enumerate}
  \item $z_m = b$.
  \item $I\bigl(\bfx : \bfy \mid \bfz = z\bigr) \leq \frac{5}{\beta}\cdot\gamma$.
  \item $\DKL\bigl(\mu(\rmx \mid z)  ~\|~ \mu(\rmx)\bigr), \DKL\bigl(\mu(\rmy \mid z) ~\|~ \mu(\rmy)\bigr) \leq \frac{5}{\beta}\cdot c$.
  \item $\Pr\bigl[g(\bfx, \bfy) \neq^* z_m \mid \bfz = z\bigr] \leq \frac{5}{\beta}\cdot \varepsilon$.
 \end{enumerate}
\end{lemma}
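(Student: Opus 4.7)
The plan is to find the desired $z$ by a Markov plus union-bound argument. Four of the quantities we want to control are already controlled on average: by Theorem~\ref{thm:rand_var} we have
\[
  \expec_{z \sim \mu(\rmz)}\bigl[\, I(\bfx : \bfy \mid \bfz = z)\,\bigr] \le \gamma,\quad
  \expec_{z}\bigl[\,\DKL(\mu(\rmx\mid z)\,\|\,\mu(\rmx))\,\bigr] \le c,
\]
\[
  \expec_{z}\bigl[\,\DKL(\mu(\rmy\mid z)\,\|\,\mu(\rmy))\,\bigr] \le c,\quad
  \expec_{z}\bigl[\,\Pr[g(\bfx,\bfy)\neq^* z_m \mid \bfz = z]\,\bigr] \le \varepsilon,
\]
where the last inequality uses Observation~\ref{obs:CloseFunction}. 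Applying Markov's inequality with threshold $5/\beta$ times the expectation, each of the four ``bad'' sets of $z$'s has $\mu(\rmz)$-measure strictly less than $\beta/5$, so their union has measure strictly less than $4\beta/5$.

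Next I would show that $\Pr[\bfz_m = b] \ge \beta$. Since the event $\{g(\bfx,\bfy) = b\} \cap \{\bfz_m \ne b\}$ is contained in $\{g(\bfx,\bfy)\neq^* \bfz_m\}$, we get
\[
  \Pr[\bfz_m = b] \ge \Pr[g(\bfx,\bfy) = b] - \Pr[g(\bfx,\bfy)\neq^* \bfz_m] \ge \alpha - \varepsilon = \beta,
\]
using the $\alpha$-balance of $\mu$ with respect to $g$ and the error bound from Observation~\ref{obs:CloseFunction}.

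Combining these, the set of $z$'s with $z_m = b$ has measure at least $\beta$, while the set of $z$'s violating any of conditions 2--4 has measure strictly less than $4\beta/5$. Hence some $z$ satisfies $z_m = b$ and all three remaining bounds simultaneously; in fact the ``good'' set has measure at least $\beta/5 > 0$. This is the required $z$.

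There is no real obstacle here; the only care needed is in setting up the error-probability expectation (since the definition of $\neq^*$ mixes ``$f$ defined'' and ``$f$ disagrees with $\bfz_m$'', one must check the Observation \ref{obs:CloseFunction} bound applies before taking the conditional expectation) and in confirming that the slack $\beta - 4\beta/5 = \beta/5$ in the union bound is strictly positive, which holds as soon as $\beta > 0$, i.e., $\alpha > \varepsilon$ --- and this is guaranteed by the hypothesis $\alpha \ge 2\varepsilon$ of Theorem~\ref{thm:MainResult}.
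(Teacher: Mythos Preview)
Your proposal is correct and follows essentially the same approach as the paper: bound $\Pr[\bfz_m=b]\ge\beta$ from $\alpha$-balance and Observation~\ref{obs:CloseFunction}, apply Markov to each of the four averaged quantities to get bad sets of measure at most $\beta/5$, and conclude by a union bound. The only imprecision is that Markov yields $\le \beta/5$ rather than a strict inequality, but since $4\beta/5<\beta$ this does not affect the argument.
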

 

\begin{proof}
 Since $\mu$ is $\alpha$-balanced according to $g$, we find that
 \begin{align*}
  \alpha &\leq \Pr\bigl[g(\bfx, \bfy)= b\bigr] \\
  &= \Pr\bigl[g(\bfx, \bfy) = b, \bfz_m = b \bigr]   + \Pr\bigl[g(\bfx, \bfy) = b, \bfz_m \neq b\bigr] \leq \Pr\bigl[\bfz_m = b \bigr] + \varepsilon.
 \end{align*}
 Thus, by rearranging we get $\Pr\bigl[\bfz_m = b \bigr] \geq \alpha - \varepsilon = \beta$.
 By expanding the information $I\bigl(\bfx : \bfy \mid \bfz\bigr)$ we find
 \begin{align*}
   &\gamma \geq I\bigl(\bfx : \bfy \mid \bfz\bigr) = \expec_{z \sim \mu(\rmz)} \Big[ I\bigl(\bfx : \bfy \mid \bfz = z\bigr) \Big]\\
   \intertext{and by the Markov inequality we get that}
  &\Pr_{z \sim \mu(\rmz)}\left[I\bigl(\bfx : \bfy \mid \bfz = z\bigr) \geq \frac{5}{\beta}\cdot \gamma\right] \leq \frac{\beta}{5}.
 \end{align*}
 Similarly, for the information $I\bigl(\bfx : \bfz\bigr)$:
 \begin{align*}
   & c \geq I\bigl(\bfx\,\bfy : \bfz\bigr) \geq I\bigl(\bfx : \bfz\bigr) = \expec_{z \sim \mu(\rmz)} \Big[  \DKL\bigl(\mu(\rmx \mid z)  ~\|~ \mu(\rmx)\bigr) \Big] \\
   \intertext{and so}
  &\Pr_{z \sim \mu(\rmz)}\left[\DKL\bigl(\mu(\rmx \mid z)  ~\|~ \mu(\rmx)\bigr) \geq \frac{5}{\beta}\cdot c\right] \leq \frac{\beta}{5}.
 \end{align*}
 The bound for $I\bigl(\bfy : \bfz\bigr)$ is analogous.
 Let $e_z = \Pr_\mu\bigl[g(\bfx, \bfy) \neq^* z_m | \bfz = z\bigr]$.
 Then,
 \begin{align*}
  & \varepsilon \geq \Pr\bigl[g(\bfx, \bfy) \neq^* {\bf z}_m\bigr] = \sum_{z \in Z} \mu(z) \cdot e_z = \expec_{z \sim \mu(\rmz)} \big[ e_z \big] \\
  & \Pr_{z \sim \mu(\rmz)} \left[ e_z \geq \frac{5}{\beta}\cdot \varepsilon \right] \leq \frac{\beta}{5}.
 \end{align*}
 Thus, by a union bound we may infer the existence of the sought $z \in Z$.
\end{proof}

Let us now fix $z \in Z$ from the previous lemma. Let $\mu_z(\rm x, \rmy) = \mu(\rmx, \rmy \mid z)$ be the distribution $\mu(\rmx, \rmy)$ conditioned on $\bfz = z$, and let $\mu_z'(\rmx, \rmy) = \mu(\rmx \mid z) \times \mu(\rm y \mid z)$ be the product of its marginals. 
Let $S$ be the support of $\mu_z(\rmx, \rmy)$, and let $S_\rmx$ and $S_\rmy$ be the supports of $\mu_z'(\rmx)$ and $\mu_z'(\rmy)$, respectively, i.e., $S_\rmx$ and $S_\rmy$ are the projections of $S$ into $X$ and $Y$.

Then Pinsker's inequality will give us that $\mu_z$ and $\mu_z'$ are very close.
Let $\delta = \sqrt{\frac{10}{\beta}\cdot\gamma}$.

\begin{lemma}
\label{lem:TotalVar}
 $\Bigl\|\mu_z(\rmx, \rmy) - \mu_z'(\rmx, \rmy)\Bigr\|_1 \leq \delta$
\end{lemma}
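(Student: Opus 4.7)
The plan is an almost direct application of Pinsker's inequality combined with the good choice of $z$ guaranteed by Lemma~\ref{lem:GoodSet}. The key observation is that the KL-divergence from the joint distribution $\mu_z(\rmx,\rmy) = \mu(\rmx,\rmy \mid z)$ to the product of its own marginals $\mu_z'(\rmx,\rmy) = \mu(\rmx \mid z) \times \mu(\rmy \mid z)$ is \emph{exactly} the conditional mutual information evaluated at $\bfz = z$, namely
\[
\DKL\bigl(\mu_z(\rmx,\rmy) ~\|~ \mu_z'(\rmx,\rmy)\bigr) = I\bigl(\bfx : \bfy \mid \bfz = z\bigr),
\]
which is immediate from the definitions given in Section~\ref{sec:information-theory}.

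The second step is to invoke property (2) of Lemma~\ref{lem:GoodSet}, which tells us that for the particular $z$ we have fixed, $I\bigl(\bfx : \bfy \mid \bfz = z\bigr) \leq \tfrac{5}{\beta}\cdot\gamma$. Finally, applying Pinsker's inequality,
\[
\bigl\|\mu_z(\rmx,\rmy) - \mu_z'(\rmx,\rmy)\bigr\|_1 \leq \sqrt{2 \cdot \DKL\bigl(\mu_z(\rmx,\rmy) ~\|~ \mu_z'(\rmx,\rmy)\bigr)} \leq \sqrt{\tfrac{10}{\beta}\cdot \gamma} = \delta,
\]
which is the desired bound. There is no real obstacle here: the lemma is essentially a restatement of Pinsker's inequality for the specific $z$ already carved out by Lemma~\ref{lem:GoodSet}, and the only thing to double-check is the identification of the KL-divergence with the conditional mutual information, which is exactly the definition used in Section~\ref{sec:information-theory}.
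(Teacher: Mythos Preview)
Your proposal is correct and matches the paper's proof essentially line for line: apply Pinsker's inequality, identify the resulting KL-divergence with $I(\bfx:\bfy\mid\bfz=z)$, and plug in the bound from Lemma~\ref{lem:GoodSet}. There is nothing to add.
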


\begin{proof}
  Indeed, by Pinsker's inequality,
  \[
    \Bigl\|\mu_z(\rmx, \rmy) - \mu_z'(\rmx, \rmy)\Bigr\|_1 \le \sqrt{2\cdot \DKL \bigl( \mu_z(\rmx, \rmy) ~\|~ \mu_z'(\rmx, \rmy)  \bigr)}.
  \]
  The right-hand side is $\sqrt{2\cdot \DKL \bigl( \mu(\rmx, \rmy \mid z) ~\|~ \mu(\rmx \mid z) \times \mu(\rm y \mid z)  \bigr)}$, which by definition of mutual information equals $\sqrt{2\cdot I\bigl(\bfx : \bfy \mid \bfz = z\bigr)}$, and by Lemma~\ref{lem:GoodSet} this is $\le \sqrt{\frac{10}{\beta}\cdot\gamma} = \delta$.
\end{proof}


For the sake of reasoning, let $(\bfx', \bfy') \sim \mu_z'(\rmx, \rmy)$ be random variables chosen according to to $\mu_z'$. Let $\varepsilon' = \frac{5}{\beta}\cdot \varepsilon + \delta$.
It then follows from Lemma~\ref{lem:GoodSet} and Lemma~\ref{lem:TotalVar} that:

\begin{lemma}
\label{lem:SmallError}
$\Pr\bigl[g(\bfx', \bfy') \neq^* z_m\bigr] \leq \varepsilon'$. 
\end{lemma}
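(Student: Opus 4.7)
The plan is to argue directly that swapping the joint distribution $\mu_z$ for its product-of-marginals $\mu_z'$ can only change the probability of the error event by at most the total variation distance between the two distributions.

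First, I would observe that, once we fix $z$ (in particular, fix the last bit $z_m \in \{0,1\}$), the event
\[
 A = \bigl\{(x,y) \in X \times Y \; : \; g(x,y) \neq^* z_m \bigr\} = \bigl\{(x,y) \; : \; g(x,y) \neq * \text{ and } g(x,y) \neq z_m\bigr\}
\]
is a fixed subset of $X \times Y$, depending only on $g$ and on the fixed value $z_m$. So $\Pr\bigl[g(\bfx, \bfy) \neq^* z_m \mid \bfz = z\bigr] = \mu_z(A)$ and $\Pr\bigl[g(\bfx', \bfy') \neq^* z_m\bigr] = \mu_z'(A)$, which reduces the lemma to showing that $\mu_z'(A) \le \frac{5}{\beta}\cdot \varepsilon + \delta$.

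Next, I would bound $|\mu_z'(A) - \mu_z(A)|$ by the total variation distance between the two distributions. A standard inequality gives
\[
 \bigl|\mu_z'(A) - \mu_z(A)\bigr| \; \le \; \bigl\|\mu_z(\rmx,\rmy) - \mu_z'(\rmx,\rmy)\bigr\|_1,
\]
and by Lemma \ref{lem:TotalVar} the right-hand side is at most $\delta$. Combining this with the bound $\mu_z(A) \le \frac{5}{\beta}\cdot \varepsilon$ from part (4) of Lemma \ref{lem:GoodSet} yields
\[
 \mu_z'(A) \; \le \; \mu_z(A) + \delta \; \le \; \tfrac{5}{\beta}\cdot \varepsilon + \delta \; = \; \varepsilon',
\]
which is exactly the claim. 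There is no real obstacle here; the lemma is essentially a one-line consequence of the two preceding lemmas, and the only care needed is to observe that the error event $A$ does not depend on how $\bfx$ and $\bfy$ are jointly distributed but only on their marginals, so that the total-variation comparison between $\mu_z$ and $\mu_z'$ applies cleanly.
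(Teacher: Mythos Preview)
Your proof is correct and is essentially the same argument as the paper's: both define the fixed error set (your $A$, the paper's $B$), write the two probabilities as $\mu_z(A)$ and $\mu_z'(A)$, bound their difference by the $\ell_1$-distance via the triangle inequality, and then invoke Lemma~\ref{lem:TotalVar} and item~(4) of Lemma~\ref{lem:GoodSet}. The only cosmetic difference is that the paper spells out the triangle-inequality step explicitly, whereas you cite it as a standard total-variation bound.
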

\begin{proof}
 We prove that 
 \[
    \Bigl|\Pr\bigl[g(\bfx, \bfy) \neq^* z_m \mid \bfz = z \bigr] - \Pr\bigl[g(\bfx', \bfy') \neq^* z_m\bigr] \Bigr| \leq \delta.  
 \]
 Since $\Pr\bigl[g(\bfx,\bfy) \neq^* z_m \mid \bfz = z\bigr] \leq \frac{5}{\beta} \cdot \varepsilon$ by Lemma~\ref{lem:GoodSet}, the lemma follows.
 Let 
 \[
B = \bigl\{ (x,y) \in S_\rmx \times S_\rmy : g(x,y) \neq z_m, g(x,y) \neq * \bigr\}.  
 \]
 \begin{align*}
  \Bigl|\Pr&\bigl[g(\bfx, \bfy) \neq^* z_m \mid \bfz = z \bigr] - \Pr\bigl[g(\bfx', \bfy') \neq^* z_m\bigr] \Bigr| &\\
  &= \Bigl| \sum_{(x,y) \in B} \mu_z(x,y) - \mu'_z(x,y) \Bigr| &\\
  &\leq  \sum_{(x,y) \in B} \Bigl| \mu_z(x,y) - \mu'_z(x,y) \Bigr| \leq \delta\tag*{\text{by the triangle inequality and Lemma~\ref{lem:TotalVar}}}
 \end{align*}
\end{proof}

Let $c' = \frac{5}{\beta}\cdot c$.
We will prove the ratio between $\mu_z'(\bfx')$ and $\mu(\bfx')$ is larger than $2^{O(c')}$ with only small probability (when $\bfx' \sim \mu_z'(\rmx)$). 
The same holds for $\mu_z'(\bfy')$ and $\mu(\bfy')$.

\begin{lemma}
\label{lem:SmallProbability} 
$\Pr\left[\mu_z'(\bfx') \geq 2^{6c'}\cdot \mu(\bfx') \right], \Pr \left[\mu_z'(\bfy') \geq 2^{6c'}\cdot \mu(\bfy') \right] \leq \frac{1}{6}.$
\end{lemma}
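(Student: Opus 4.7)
The plan is to convert the KL-divergence bound from Lemma~\ref{lem:GoodSet} into a tail bound on the log-likelihood ratio $L(x) = \log(\mu_z'(x)/\mu(x))$, essentially by a Markov-type argument. By symmetry, I only need to argue the bound for $\bfx'$; the case of $\bfy'$ is identical using the corresponding part of Lemma~\ref{lem:GoodSet}(3). The starting point is the identity $\expec_{\bfx' \sim \mu_z'}[L(\bfx')] = \DKL\bigl(\mu_z'(\rmx) \,\|\, \mu(\rmx)\bigr) \leq c'$, which is just the definition of KL divergence combined with Lemma~\ref{lem:GoodSet}(3), noting that $\mu_z'(\rmx) = \mu(\rmx \mid z)$.

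The difficulty is that $L$ can take negative values, so Markov's inequality cannot be applied directly to $L$. My preferred workaround is the data-processing inequality with the two-cell partition induced by the event $A = \{x : \mu_z'(x) \geq 2^{6c'} \mu(x)\}$, which reduces the question to a Bernoulli KL divergence:
\[
c' \;\geq\; \DKL\bigl(\mu_z'(\rmx) \,\|\, \mu(\rmx)\bigr) \;\geq\; p \log\frac{p}{q} + (1-p)\log\frac{1-p}{1-q},
\]
where $p = \mu_z'(A)$ and $q = \mu(A)$. By the definition of $A$ we have $q \leq 2^{-6c'}\, p$, so the first term on the right is at least $6c' \cdot p$, while the second term is bounded below by a small absolute constant (using $(1-p)\log(1-p) \geq -1/(e\ln 2)$ together with $\log((1-p)/(1-q)) \geq \log(1-p)$ since $q \geq 0$). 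Rearranging then delivers the desired bound $\mu_z'(A) \leq 1/6$. An alternative that avoids the two-cell reduction is to bound the negative part of $L$ directly via the elementary inequality $-\log t \leq (1/t - 1)/\ln 2$, which yields $\expec_{\bfx' \sim \mu_z'}[L^-(\bfx')] = O(1)$, and then to apply Markov to the non-negative variable $L^+$.

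The main point requiring care is the management of constants: naive estimates produce bounds of the form $(c' + O(1))/(6c')$, which match $1/6$ only in the limit, so some attention must be paid to the lower bound on the second Bernoulli term (or to the estimate of $\expec[L^-]$) in order to bring the final probability all the way down to exactly $1/6$. I do not expect any deeper obstacle than this bookkeeping, and I would guess the cleanest write-up is the data-processing route, since it replaces the somewhat delicate control of the full distribution of $L$ with a single two-point KL calculation.
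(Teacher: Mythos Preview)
Your proposal is correct and, in fact, more careful than the paper's own proof. The paper simply writes $c' \geq \DKL\bigl(\mu_z'(\rmx)\,\|\,\mu(\rmx)\bigr) = \expec\bigl[\log\tfrac{\mu_z'(\bfx')}{\mu(\bfx')}\bigr]$ and then applies Markov's inequality directly to $\log\tfrac{\mu_z'(\bfx')}{\mu(\bfx')}$ to conclude $\Pr\bigl[\log\tfrac{\mu_z'(\bfx')}{\mu(\bfx')}\ge 6c'\bigr]\le \tfrac{1}{6}$. This glosses over exactly the issue you identify: the log-likelihood ratio is not a nonnegative random variable, so Markov does not literally apply. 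Your data-processing route (collapse to the two-cell partition $\{A,A^c\}$, use $q\le 2^{-6c'}p$, and control the second Bernoulli term by $(1-p)\log(1-p)\ge -1/(e\ln 2)$) is a genuine and clean repair of that step.

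Your worry about constants is also well placed: both your argument and any honest version of the paper's yield $p \le \tfrac{1}{6} + O(1/c')$ rather than exactly $\tfrac{1}{6}$. This does not matter downstream: in Lemma~\ref{lem:BigBuckets} the bound is fed into a union bound that totals $\tfrac{2}{3} + \tfrac{2}{2^{c'}} + \tfrac{1}{6} < 1$, so there is $\Theta(1)$ slack, and one may freely assume $c'$ is at least a large constant (otherwise $\tq\cdot w = O(\alpha)$ and the conclusion of Theorem~\ref{thm:MainResult} is vacuous). So your plan goes through; the paper's presentation is just terser and sweeps the sign issue under the rug.
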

\begin{proof}
 We prove the lemma for $\mu_z'(\bfx')$, the proof for $\mu_z'(\bfy')$ is analogous. 
 By Lemma~\ref{lem:GoodSet} we know that $\DKL\bigl( \mu(\rmx\mid z)  ~\|~ \mu(\rmx)\bigr) = \DKL\bigl( \mu_z(\rmx)  ~\|~ \mu(\rmx)\bigr) = \DKL\bigl( \mu'_z(\rmx)  ~\|~ \mu(\rmx)\bigr) \leq c'$.
 We expand the Kullback-Leibler divergence:
 \begin{align*}
   &c' \geq \DKL\bigl( \mu'_z(\rmx)  ~\|~ \mu(\rmx)\bigr) = \sum_{x \in S_\rmx} \mu_z'(x) \log \frac{\mu_z'(x)}{\mu(x)} = \expec\left[ \log \frac{\mu_z'(\bfx')}{\mu(\bfx')}\right], \\
   \intertext{and then use the Markov inequality:}
  &\Pr \left[\mu_z'(\bfx') \geq 2^{6c'}\cdot \mu(\bfx') \right] = \Pr \left[ \log \frac{\mu_z'(\bfx')}{\mu(\bfx')} \geq 6c'\right] \leq \frac{1}{6}. \qedhere
 \end{align*}
\end{proof}
We now split $S_\rmx$ and $S_\rmy$ into buckets $C^\rmx_\ell$ and $C^\rmy_\ell$ (for $\ell \geq 1$), where the $\ell$-th buckets are
\begin{align*}
 C^\rmx_\ell &= \left\{x \in S_\rmx \;\Big|\; \frac{(\ell - 1)}{2^{c'}} <  \frac{\mu_z'(x)}{\mu(x)} \leq \frac{\ell}{2^{c'}} \right\}, \\
 C^\rmy_\ell &= \left\{y \in S_\rmy \;\Big|\; \frac{(\ell - 1)}{2^{c'}} <  \frac{\mu_z'(y)}{\mu(y)} \leq \frac{\ell}{2^{c'}} \right\}.
\end{align*}
In a bucket $C^\rmx_\ell$ there are elements of $S_\rmx$ such that their probability under $\mu_z'(\rmx)$ is approximately $\frac{\ell}{2^{c'}}$-times bigger than their probability under $\mu(\rmx)$.
By Lemma~\ref{lem:SmallProbability}, it holds that with high probability the elements $x \in S_\rmx, y \in S_\rmy$ are in the buckets $C^\rmx_{\ell_1}$ and $C^\rmy_{\ell_2}$ for $\ell_1,\ell_2 \leq 2^{7c'}$.
Thus, if we find a bucket $C^\rmx_{\ell_1}$ for ${\ell_1} \leq 2^{7c'}$ which has probability at least $\frac{1}{2^{O(c')}}$ under $\mu_z'(\rmx)$, then it has also probability at least $\frac{1}{2^{O(c')}}$ under $\mu(\rmx)$.
The same holds also for buckets $C^\rmy_\ell$.
In the next lemma we will show that there are buckets $C^\rmx_{\ell_1}$ and $C^\rmy_{\ell_2}$ of large probability under $\mu_z'$ such that the probability of error on $C^\rmx_{\ell_1} \times C^\rmy_{\ell_2}$ is still small.

\begin{lemma}
\label{lem:BigBuckets}
 There exist buckets $C^\rmx_{\ell_1}$ and $C^\rmy_{\ell_2}$ such that
 \begin{enumerate}
  \item $1 < \ell_1, \ell_2 \leq 2^{7c'}$.
  \item $\Pr\bigl[\bfx' \in C^\rmx_{\ell_1}\bigr], \Pr\bigl[\bfy' \in C^\rmy_{\ell_2}\bigr] \geq \frac{1}{6\cdot 2^{7c'}}$.
  \item $\Pr\bigl[g(\bfx',\bfy') \neq^* z_m, (\bfx', \bfy') \in C^\rmx_{\ell_1} \times C^\rmy_{\ell_2}\bigr] \leq 6\varepsilon' \cdot \Pr\bigl[(\bfx', \bfy') \in C^\rmx_{\ell_1} \times C^\rmy_{\ell_2}\bigr].$
 \end{enumerate}
\end{lemma}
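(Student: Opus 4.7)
My plan is to identify, on each side, a collection of ``large'' buckets --- those indexed by $\ell \in (1, 2^{7c'}]$ with mass at least $\frac{1}{6 \cdot 2^{7c'}}$ under the corresponding marginal of $\mu_z'$ --- show that these jointly carry a constant fraction of the mass of $\mu_z'$, and then invoke an averaging argument to locate a pair of large buckets with small conditional error rate.

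I would first rule out the two extreme ranges of buckets on the $\rmx$-side. Any element of a bucket $C^\rmx_\ell$ with $\ell > 2^{7c'}$ satisfies $\mu_z'(x)/\mu(x) > 2^{6c'}$, so by Lemma~\ref{lem:SmallProbability} these buckets jointly carry mass at most $\tfrac{1}{6}$ under $\mu_z'$. Bucket $C^\rmx_1$ consists of $x$ with $\mu_z'(x)/\mu(x) \leq 2^{-c'}$, giving total mass at most $2^{-c'}$. Defining $L_\rmx := \{\ell : 1 < \ell \leq 2^{7c'},\ \Pr[\bfx' \in C^\rmx_\ell] \geq \tfrac{1}{6 \cdot 2^{7c'}}\}$, the buckets in the range $(1, 2^{7c'}]$ which are not in $L_\rmx$ together contribute mass less than $2^{7c'} \cdot \tfrac{1}{6 \cdot 2^{7c'}} = \tfrac{1}{6}$. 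Hence $\Pr[\bfx' \in \bigcup_{\ell \in L_\rmx} C^\rmx_\ell] \geq 1 - \tfrac{1}{6} - \tfrac{1}{6} - 2^{-c'} \geq \tfrac{1}{2}$ once $c'$ exceeds a small absolute constant --- and if it does not, the conclusion of Theorem~\ref{thm:MainResult} is trivially met. The symmetric argument yields an analogous $L_\rmy$.

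Next, since $\mu_z'$ is a product distribution, the probability that $(\bfx', \bfy')$ lies in $C^\rmx_{\ell_1} \times C^\rmy_{\ell_2}$ for some $(\ell_1, \ell_2) \in L_\rmx \times L_\rmy$ is at least $\tfrac{1}{2} \cdot \tfrac{1}{2} = \tfrac{1}{4}$. Combining this with the bound $\Pr[g(\bfx', \bfy') \neq^* z_m] \leq \varepsilon'$ from Lemma~\ref{lem:SmallError}, the conditional probability of the error event given the joint ``large-bucket'' event is at most $\tfrac{\varepsilon'}{1/4} = 4\varepsilon'$. Averaging over pairs $(\ell_1, \ell_2) \in L_\rmx \times L_\rmy$ weighted by $\Pr[(\bfx', \bfy') \in C^\rmx_{\ell_1} \times C^\rmy_{\ell_2}]$, there must exist a pair whose conditional error rate inside that bucket pair is at most $4\varepsilon' \leq 6\varepsilon'$. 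This pair satisfies the three conditions of the lemma by construction: the index range $1 < \ell_1, \ell_2 \leq 2^{7c'}$ is forced by membership in $L_\rmx \times L_\rmy$, the marginal mass bound is the defining property of $L_\rmx$ and $L_\rmy$, and the error rate bound is the outcome of the averaging.

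The only real obstacle is constant bookkeeping --- matching the specific numerical thresholds $2^{7c'}$, $\tfrac{1}{6\cdot 2^{7c'}}$, and $6\varepsilon'$ stated in the lemma. The argument itself is routine: the heavy tails are controlled by Lemma~\ref{lem:SmallProbability}, the small buckets by a counting estimate, the product structure of $\mu_z'$ lifts per-marginal bounds to the joint setting, and Markov delivers the required pair.
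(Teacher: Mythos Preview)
Your proposal is correct and follows essentially the same approach as the paper: rule out bucket~$1$ by the pointwise ratio bound, rule out buckets with index above $2^{7c'}$ via Lemma~\ref{lem:SmallProbability}, rule out low-mass buckets by counting, and then locate a good pair by an averaging/Markov argument on the error. The only cosmetic difference is that the paper runs a single union bound over all seven bad events (three per side plus one Markov event $e(\bfell_1,\bfell_2) > 6\varepsilon'$), whereas you first isolate the large-bucket sets $L_\rmx, L_\rmy$, use the product structure of $\mu_z'$ to get joint mass $\ge \tfrac14$, and then average --- this yields the slightly sharper $4\varepsilon'$ but is otherwise the same argument.
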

\begin{proof}
 We prove that $\ell_1, \ell_2$ exist via the probabilistic method. Let ${\bfell}_1$ and ${\bfell}_2$ be the buckets of $\bfx'$ and $\bfy'$, respectively. Thus
 $ \Pr\bigl[{\bfell}_1 = \ell\bigr] = \Pr\bigl[\bfx' \in C^\rmx_\ell\bigr]$ and $ \Pr\bigl[{\bfell}_2 = \ell\bigr] = \Pr\bigl[\bfy' \in C^\rmy_\ell\bigr].$
 
 Let $B_1, B_2 \subseteq L' = \{1,\dots, 2^{7c'}\}$ be sets of indices of small probability, i.e., for $i \in \{1,2\}$
 \begin{align*}
  B_i = \left\{\ell \in L' ~\bigl|~ \Pr[\bfell_i = \ell] \leq \frac{1}{6\cdot 2^{7c'}} \right\}.
 \end{align*}
 We will prove that with high probability we have $2^{7c'} \geq {\bfell}_1 > 1$ and ${\bfell}_1 \not \in B_1$.
 The proof for ${\bfell}_2$ is analogous.
 \[
  \Pr\bigl[{\bfell}_1 = 1\bigr] = \Pr\bigl[\bfx' \in C^\rmx_1\bigr] = \sum_{x \in C^\rmx_1} \mu_z'(x) \leq \frac{\sum_{x \in C^\rmx_1} \mu(x)}{2^{c'}} \leq \frac{1}{2^{c'}}
 \]
 By Lemma~\ref{lem:SmallProbability}, we get $\Pr\bigl[{\bfell}_1 > 2^{7c'}\bigr] = \Pr\bigl[\mu_z'(\bfx') \geq 2^{6c'}\cdot \mu(\bfx')\bigr] \leq \frac{1}{6}.$
 There is only small probability that ${\bfell}_1$ is in $B_1$.
 \[
  \Pr\bigl[{\bfell}_1 \in B_1\bigr] = \sum_{\ell \in B_1} \Pr[{\bfell}_1 = \ell] \leq \frac{|L'|}{6 \cdot 2^{7c'}} = \frac{1}{6}
 \]
 Thus, we have that ${\bfell}_i \in B_i$ or ${\bfell}_i = 1$ or ${\bfell}_i > 2^{7c'}$ with probability at most $\frac{2}{3} + \frac{2}{2^{c'}}$. 
 
 By Lemma~\ref{lem:SmallError}, we have that $\Pr\bigl[g(\bfx', \bfy') \neq^* z_m\bigr] \leq \varepsilon'$.
 By expanding the probability and by Markov inequality we will now get the last inequality for $C^\rmx_{\ell_1}$ and $C^\rmy_{\ell_2}$.
 Let 
 \[
    e(\ell_1, \ell_2) = \Pr\bigl[g(\bfx', \bfy') \neq^* z_m \mid \bfx' \in C^\rmx_{\ell_1}, \bfy' \in C^\rmy_{\ell_2}\bigr].
 \]
 We will prove there is $\ell_1$ and $\ell_2$ such that $e(\ell_1, \ell_2) \leq 6\varepsilon'$.
 This is equivalent to the third bound of the lemma. We have:
  $\varepsilon' \geq \Pr\bigl[g(\bfx', \bfy') \neq^* z_m\bigr] =  \expec \big[ e(\bfell_1, \bfell_2) \big]$
   and thus, by Markov,
  $\Pr\bigl[e({\bfell}_1,{\bfell}_2) > 6\varepsilon'\bigr] \leq \frac{1}{6}.$
 By a union bound we conclude that there must exist $1 < \ell_1, \ell_2 \leq 2^{7c'}$ such that $\Pr[\bfell_1 = \ell_1], \Pr[\bfell_2 = \ell_2] \geq \frac{1}{6\cdot 2^{7c'}}$ and $e(\ell_1,\ell_2) \leq 6\varepsilon'$. 
\end{proof}

As a corollary we will prove that the rectangle $C^\rmx_{\ell_1} \times C^\rmy_{\ell_2}$ (given by the previous lemma) is a good rectangle under the original distribution $\mu$. We remark that the proof of the following corollary is the only place where we use the fact that $\bfx$ and $\bfy$ are independent.

\begin{corollary}
\label{cor:large_rectangle}
 There exists a rectangle $R \subseteq S_\rmx \times S_\rmy$ such that 
 \begin{enumerate}
  \item $\Pr\bigl[(\bfx,\bfy) \in R\bigr] \geq \frac{1}{36\cdot 2^{26c'}}.$
  \item $\Pr\bigl[g(\bfx, \bfy) \neq^* z_m, (\bfx, \bfy) \in R\bigr] \leq 24\varepsilon' \cdot \Pr\bigl[(\bfx,\bfy) \in R\bigr]$.
 \end{enumerate}
\end{corollary}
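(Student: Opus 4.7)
My plan is to exploit the two ingredients highlighted by the proof sketch: the fact that $\mu(\rmx,\rmy) = \mu_1(\rmx)\times\mu_2(\rmy)$ is a product distribution, and the bucket definition which gives two-sided control on the ratio $\mu_z'(x)/\mu(x)$ for $x \in C^\rmx_{\ell_1}$ (and analogously for $y$). The natural choice of rectangle is $R = C^\rmx_{\ell_1}\times C^\rmy_{\ell_2}$, where $\ell_1,\ell_2$ are given by Lemma~\ref{lem:BigBuckets}. The whole corollary will then reduce to a pointwise comparison between $\mu(x,y) = \mu(x)\mu(y)$ and $\mu_z'(x,y) = \mu_z'(x)\mu_z'(y)$ on $R$.

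First I would record the pointwise bounds that come straight from the bucket definitions. For any $x \in C^\rmx_{\ell_1}$ the definition of the bucket gives $\frac{\ell_1 - 1}{2^{c'}} < \frac{\mu_z'(x)}{\mu(x)} \le \frac{\ell_1}{2^{c'}}$, hence
\[
    \frac{2^{c'}}{\ell_1}\,\mu_z'(x) \;\le\; \mu(x) \;<\; \frac{2^{c'}}{\ell_1-1}\,\mu_z'(x),
\]
and since Lemma~\ref{lem:BigBuckets} guarantees $\ell_1 > 1$, i.e.\ $\ell_1 \ge 2$, we have $\ell_1 - 1 \ge \ell_1/2$, so $\mu(x) < \frac{2\cdot 2^{c'}}{\ell_1}\mu_z'(x)$. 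The same two-sided inequality holds for $y \in C^\rmy_{\ell_2}$.

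Next I would multiply these inequalities. Because $\mu$ and $\mu_z'$ are both products of their marginals, for every $(x,y) \in R$ we get
\[
    \frac{2^{2c'}}{\ell_1 \ell_2}\,\mu_z'(x,y) \;\le\; \mu(x,y) \;<\; \frac{4\cdot 2^{2c'}}{\ell_1 \ell_2}\,\mu_z'(x,y).
\]
Summing the lower bound over $R$ and using $\Pr[(\bfx',\bfy')\in R] \ge \frac{1}{36\cdot 2^{14c'}}$ from Lemma~\ref{lem:BigBuckets} together with $\ell_1,\ell_2 \le 2^{7c'}$ gives
\[
    \Pr[(\bfx,\bfy)\in R] \;\ge\; \frac{2^{2c'}}{\ell_1\ell_2}\Pr[(\bfx',\bfy')\in R] \;\ge\; \frac{1}{36\cdot 2^{26c'}},
\]
which is the first claim. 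For the second claim, summing the upper bound over $B = \{(x,y)\in R : g(x,y)\neq^* z_m\}$ and then applying the third item of Lemma~\ref{lem:BigBuckets} yields
\[
    \Pr\bigl[g(\bfx,\bfy)\neq^* z_m,(\bfx,\bfy)\in R\bigr] < \frac{4\cdot 2^{2c'}}{\ell_1\ell_2}\cdot 6\varepsilon'\cdot \Pr[(\bfx',\bfy')\in R] \le 24\varepsilon'\cdot \Pr[(\bfx,\bfy)\in R],
\]
where the last inequality chains the previous lower bound on $\Pr[(\bfx,\bfy)\in R]$.

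There is no real obstacle here beyond bookkeeping; the one subtlety worth noting is the use of $\ell_1, \ell_2 > 1$ (so that $\ell - 1 \ge \ell/2$) to obtain a clean constant in the upper pointwise bound. This is also the unique place in the whole argument where we genuinely use that $\mu$ is a product distribution: without it the inequality $\mu(x,y) \asymp \mu(x)\mu(y)$ would not be available, and one could not convert bucket-wise ratio control between marginals into ratio control between joint distributions.
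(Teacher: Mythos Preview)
Your proof is correct and follows essentially the same route as the paper: take $R = C^\rmx_{\ell_1}\times C^\rmy_{\ell_2}$, use the bucket definition to get two-sided pointwise control of $\mu(x,y)/\mu_z'(x,y)$ on $R$, and then sum. The paper carries out the same computation directly on the sums rather than first isolating the pointwise inequalities, and it phrases the final constant as $\frac{\ell_1\ell_2}{(\ell_1-1)(\ell_2-1)}\le 4$ instead of your $\ell_i-1\ge \ell_i/2$, but these are the same estimate; in particular both arguments rely on exactly the two facts you flagged, namely $\ell_1,\ell_2>1$ and that $\mu$ is a product distribution.
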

\begin{proof}
 Let $R = C^\rmx_{\ell_1} \times C^\rmy_{\ell_2}$ where $C^\rmx_{\ell_1}$ and $C^\rmy_{\ell_2}$ are buckets given by Lemma~\ref{lem:BigBuckets}.
 By Lemma~\ref{lem:BigBuckets}, we get
 \begin{align*}
  \frac{1}{6\cdot 2^{7c'}} \leq \Pr\bigl[\bfx' \in C^\rmx_{\ell_1}\bigr] = \sum_{x \in C^\rmx_{\ell_1}} \mu_z'(x) \leq \sum_{x \in C^\rmx_{\ell_1}} \frac{\ell_1 \cdot \mu(x)}{2^{c'}} = \Pr\bigl[\bfx \in C^\rmx_{\ell_1}\bigr]\cdot \frac{\ell_1}{2^{c'}}.
 \end{align*}
 By rearranging we get
 \[
 \Pr\bigl[\bfx \in C^\rmx_{\ell_1}\bigr] \geq \frac{2^{c'}}{6\ell_1\cdot 2^{7c'}} \geq \frac{1}{6\cdot 2^{13c'}}
 \]
 The bound for $\Pr\bigl[\bfy \in C^\rmy_{\ell_2}\bigr]$ is analogous, thus we have $\Pr\bigl[(\bfx, \bfy) \in R\bigr] \geq \frac{1}{36\cdot 2^{26c'}}.$ (Here and below, we crucially use the fact that $\bfx, \bfy$ are given by a product distribution.)
 Now we prove the second bound for $R$.
 Let $B = \bigl\{ (x,y) \in R : g(x,y) \neq z_m, g(x,y) \neq *\bigr\}.$
 \begin{align*}
  6\varepsilon' \cdot \Pr\bigl[(\bfx, \bfy) \in R\bigr] \cdot \frac{\ell_1 \ell_2}{2^{2c'}} &\geq 6\varepsilon' \cdot \Pr\bigl[(\bfx', \bfy') \in R\bigr] & \text{by definition of buckets} \\
  &\geq  \Pr\bigl[(\bfx', \bfy') \in B\bigr]& \text{by Lemma~\ref{lem:BigBuckets}} \\
  &\geq \Pr\bigl[(\bfx, \bfy) \in B\bigr] \cdot \frac{(\ell_1 - 1)(\ell_2 - 1)}{2^{2c'}}& \text{by definition of buckets}
 \end{align*}
 Thus, by rearranging we get
 \[
  \Pr\bigl[(\bfx, \bfy) \in B] \leq 6\varepsilon'\cdot \Pr\bigl[(\bfx, \bfy) \in R\bigr] \cdot \frac{\ell_1\ell_2}{(\ell_1 - 1)(\ell_2 - 1)} \leq 24\varepsilon'\cdot \Pr\bigl[(\bfx, \bfy) \in R\bigr],
 \]
 as $\frac{\ell_1\ell_2}{(\ell_1 - 1)(\ell_2 - 1)} \leq 4$ for $\ell_1,\ell_2 > 1$ by Lemma~\ref{lem:BigBuckets}.
\end{proof}

\begin{proof}[Proof of Theorem~\ref{thm:MainResult}]
Suppose that $\tu \cdot n \leq o\bigl(k / w\bigr)$.
Let $R$ be the rectangle given by Corollary~\ref{cor:large_rectangle}.
It holds that the rectangle $R$ is $24\varepsilon'$-error $b$-monochromatic for $g$ under $\mu$.
Therefore, for the function $g$ holds that
\begin{equation}
 \label{eq:Final}
 \mono^{24\varepsilon'}_\mu(g) \geq \Pr\bigl[(\bfx, \bfy) \in R\bigr] \geq \frac{1}{36\cdot 2^{26c'}}.
\end{equation}
We need to argue that $\varepsilon'$ is $O(\varepsilon/\alpha)$.
By definition,
\[
 \varepsilon' = \frac{5}{\alpha - \varepsilon}\cdot\varepsilon + \delta.
\]
We recall that
\[
 \delta = O\left(\sqrt{\frac{\tu\cdot n \cdot w}{p}}\right) \leq  \sqrt{\frac{o(k)}{p}}.
\]
Thus, we can set $p$ to be large enough so that $\delta$ is smaller than arbitrary constant and still $p \leq o(k)$.
By the assumption we have $2\varepsilon < \alpha$.
Thus, $\frac{\varepsilon}{\alpha - \varepsilon} \leq \frac{2\varepsilon}{\alpha}$ and we conclude that $\varepsilon'$ is $O\bigl(\varepsilon/\alpha\bigr)$.
Since $c' = O\bigl(\frac{\tq\cdot w}{\alpha\cdot(1-\varepsilon)}\bigr) = O\bigl(\frac{\tq\cdot w}{\alpha}\bigr)$, we get the result by rearranging Inequality~(\ref{eq:Final}).
\end{proof}

\section{Applications}
\label{sec:App}
In this section we apply Theorem~\ref{thm:MainResult} to derive lower bounds for several explicit functions -- Inner Product ($\IP$), Disjointness ($\Disj$), Gap Orthogonality ($\ORT$) and Gap Hamming Distance ($\GH$):

\begin{align*}
 \IP(x,y) &= \sum_{i \in n} x_i \cdot y_i \mod 2, \\
 \GH_n(x,y) &=
 \begin{cases}
  1 & \text{if $\Delta_H(x,y) \geq \frac{n}{2} + \sqrt{n} $}, \\
  0 & \text{if $\Delta_H(x,y) \leq \frac{n}{2}  - \sqrt{n}$}. \\
 \end{cases}
\end{align*}

The function $\Delta_H$ is the Hamming Distance of two strings, i.e., $\Delta_H(x,y)$ is a number of indices $i \in [n]$ such that $x_i \neq y_i$.
For $\IP_\R(x,y) = \sum_{i \in [n]} (-1)^{x_i + y_i}$ we define
\[
 \ORT_{n,d} (x,y) =
 \begin{cases}
  1 & \text{ if } \bigl|\IP_\R(x,y) \bigr| \geq 2d\cdot \sqrt{n} \\
  0 & \text{ if } \bigl|\IP_\R(x,y) \bigr| \leq d\cdot\sqrt{n}.
 \end{cases}
\]
The standard value for $d$ is 1, thus we denote $\ORT_n = \ORT_{n,1}$.
Note that $\Delta_H(x,y) = \frac{n - \IP_\R(x,y)}{2}$ and $\IP_\R(x,y)$ is the Inner Product of $x', y'$ over $\R$ where $x'$ and $y'$ arise from $x$ and $y$ by replacing $0$ by $1$ and $1$ by $-1$.
We present previous results with bounds for measures of interest under hard distributions.

\begin{theorem}[\cite{kushilevitz96}]
 Let $\mu_1$ be a uniform distribution on $\{0,1\}^n \times \{0,1\}^n$. 
 Then,
 \[
  \disc_{\mu_1}(\IP) \leq \frac{1}{2^{n/2}}.
 \]
\end{theorem}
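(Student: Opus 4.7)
The plan is to use the standard spectral (eigenvalue) argument, which bounds the discrepancy of any sign matrix by its operator norm. Concretely, I would view the quantity $\mu_1\bigl(R \cap \IP^{-1}(0)\bigr) - \mu_1\bigl(R \cap \IP^{-1}(1)\bigr)$, for a rectangle $R = A \times B$, as a bilinear form applied to the indicator vectors of $A$ and $B$, with the bilinear form given by the Hadamard-like sign matrix $H \in \{\pm 1\}^{2^n \times 2^n}$ defined by $H_{x,y} = (-1)^{\IP(x,y)}$.

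First I would write, under the uniform distribution,
\[
  \mu_1\bigl(R \cap \IP^{-1}(0)\bigr) - \mu_1\bigl(R \cap \IP^{-1}(1)\bigr) \;=\; \frac{1}{2^{2n}} \sum_{x\in A,\, y\in B} (-1)^{\IP(x,y)} \;=\; \frac{1}{2^{2n}}\, \mathbf{1}_A^{\,\top} H\, \mathbf{1}_B,
\]
so bounding the discrepancy reduces to upper bounding $|\mathbf{1}_A^{\,\top} H\, \mathbf{1}_B|$ uniformly over $A, B \subseteq \{0,1\}^n$. Next I would apply Cauchy--Schwarz to obtain $|\mathbf{1}_A^{\,\top} H\, \mathbf{1}_B| \le \|\mathbf{1}_A\|_2 \cdot \|H\| \cdot \|\mathbf{1}_B\|_2$, where $\|H\|$ denotes the spectral (operator) norm of $H$.

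The only nontrivial ingredient is computing $\|H\|$. Here I would use the fact that $H$ is (up to scaling) a Hadamard matrix: since $\IP$ is bilinear over $\mathbb{F}_2$, one checks directly that $H H^{\top} = 2^n I$, so all singular values of $H$ equal $2^{n/2}$ and hence $\|H\| = 2^{n/2}$. Combined with the trivial bounds $\|\mathbf{1}_A\|_2, \|\mathbf{1}_B\|_2 \le 2^{n/2}$, this yields $|\mathbf{1}_A^{\,\top} H\, \mathbf{1}_B| \le 2^{3n/2}$, and dividing by $2^{2n}$ gives $\disc_{\mu_1}(\IP) \le 2^{-n/2}$, as required.

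There is no real obstacle: the only thing that could go wrong is verifying $H H^{\top} = 2^n I$, and this follows immediately from the character-orthogonality identity $\sum_{y \in \{0,1\}^n} (-1)^{\IP(x \oplus x',\, y)} = 2^n \cdot \mathbf{1}[x = x']$, which is a one-line computation. The argument is completely standard and appears (in essentially this form) in Kushilevitz--Nisan; I would simply cite it and present the short calculation above.
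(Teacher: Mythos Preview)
Your argument is correct and is exactly the standard spectral (Lindsey-lemma / Hadamard-matrix) proof from Kushilevitz--Nisan. Note, however, that the paper does not give its own proof of this statement at all: it is simply quoted as a known result with a citation to \cite{kushilevitz96}, so there is nothing to compare against beyond observing that your write-up matches the cited source.
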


\begin{theorem}[Babai et al.~\cite{babai86}]
 Let $\rho < 1/100$ and $\mu_2$ be a a uniform distribution over $S \times S$, where $S$ consists of $n$-bit strings containing exactly $\sqrt{n}$ 1's.
 Then,
 \[
  \mono^\rho_{\mu_2}(\Disj) \leq \frac{1}{2^{\Omega(\sqrt{n})}}.
 \]
\end{theorem}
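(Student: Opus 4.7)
The plan is to prove the bound via a Razborov-type corruption argument. Since $\mono^\rho_{\mu_2}(\Disj) = \min_b \max_R \mu_2(R)$ is a minimum over $b$, it suffices to handle a single choice of $b$, and I would take $b = 1$. The case $b = 0$ would in fact fail: the rectangle $\{x \in S : 1 \in x\}^2$ is perfectly $0$-monochromatic for $\Disj$ (any two such strings share coordinate $1$) and has $\mu_2$-measure $1/n$, which exceeds $2^{-\Omega(\sqrt n)}$. So the minimum in the definition of $\mono$ must be attained at $b = 1$.

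Thus let $R = A \times B$ be a $\rho$-error $1$-monochromatic rectangle, so that the disjoint-pair set $D = \{(x,y) \in A \times B : x \cap y = \emptyset\}$ satisfies $|D| \ge (1-\rho)|A||B|$. The key observation is that the ambient density of disjoint pairs under $\mu_2$ is the constant $\gamma = \binom{n-\sqrt n}{\sqrt n}/\binom{n}{\sqrt n} \to 1/e$, which for $\rho < 1/100$ is strictly less than $1 - \rho$. Hence $R$ is forced to have an anomalously high density of disjoint pairs, and the goal is to convert this anomaly into the bound $\mu_2(A)\mu_2(B) \le 2^{-\Omega(\sqrt n)}$. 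I would pursue this via a hypercontractive / level-$k$ Fourier inequality on the Johnson slice $\binom{[n]}{\sqrt n}$, which bounds the correlation $\sum_{x \in A, y \in B} \mathbf 1[x \cap y = \emptyset]$ in terms of norms of $\mathbf 1_A, \mathbf 1_B$ that decay sharply when $A, B$ are small subsets of $S$. Matching this upper bound on $|D|$ against the lower bound $(1-\rho)|A||B|$ then yields the claimed exponential bound on $\mu_2(A)\mu_2(B)$.

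The main obstacle is that standard spectral arguments are too weak. The expander mixing lemma applied to the Kneser graph on $S$ (whose second eigenvalue is $\Theta(|S|/\sqrt n)$, since $|\lambda_1|/d = \sqrt n / (n - \sqrt n)$) yields only $\mu_2(A)\mu_2(B) \le O(1/n) = 2^{-O(\log n)}$, which is far short of $2^{-\Omega(\sqrt n)}$. Extracting the correct exponent requires going beyond the second eigenvalue: either by invoking the Bonami--Beckner hypercontractive inequality for the slice (whose contractive exponent reflects the "level" $k = \sqrt n$ and thus produces the required $\sqrt n$ factor), or by a direct entropy/encoding argument that exploits $\log |S| = \Theta(\sqrt n \log n)$ and shows that the $(1 - \rho) - \gamma = \Omega(1)$ excess of disjoint pairs inside $R$ allows one to encode $\Omega(\sqrt n)$ bits of information about $A$ and $B$, contradicting the assumed sizes.
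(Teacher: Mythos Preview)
The paper does not prove this theorem; it is quoted from Babai, Frankl, and Simon as a known result and used as a black box in the applications section. There is therefore no in-paper argument to compare your proposal against.

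On its own merits, your proposal is a reasonable outline but not a proof. Your reduction to $b=1$ is correct and well argued (the $0$-monochromatic rectangle $\{x\in S : 1 \in x\}^2$ indeed has $\mu_2$-measure $1/n$), and your diagnosis that expander mixing on the Kneser graph yields only a $2^{-O(\log n)}$ bound is accurate. But the two routes you propose for extracting the $\sqrt{n}$ exponent --- slice hypercontractivity and an entropy/encoding argument --- are only named, not executed, and the encoding sketch in particular (``the $\Omega(1)$ excess of disjoint pairs allows one to encode $\Omega(\sqrt n)$ bits'') does not yet specify a concrete compression scheme. For comparison, the original Babai--Frankl--Simon argument is an elementary combinatorial one that does not invoke hypercontractivity at all; so while your hypercontractive route can be made to succeed, it is heavier machinery than what the cited reference actually uses.
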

Sherstov~\cite{sherstov11} provided a lower bound of communication complexity of $\GH$ by lower bound of corruption bound of $\ORT_{n,\frac{1}{8}}$ following by reduction to $\GH$.
\begin{theorem}[Sherstov~\cite{sherstov11}]
 Let $\rho > 0$ be sufficiently small and $\mu_3$ be a uniform distribution over $\{0,1\}^n \times \{0,1\}^n$.
 Then,
 \[
  \cb^\rho_{\mu_3}(\ORT_{n,\frac{1}{8}}) \geq \rho \cdot n.
 \]
\end{theorem}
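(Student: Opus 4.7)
The plan is to lower-bound the corruption of $\ORT_{n,1/8}$ under the uniform distribution $\mu_3$ by showing that any $\rho$-error monochromatic rectangle has small $\mu_3$-measure. The key ingredient is an anti-concentration estimate for the real inner product $\IP_\R(x,y) = \sum_{i=1}^n (-1)^{x_i+y_i}$ on product rectangles: as long as $\mu_3(A\times B)$ is at least $2^{-c_0 n}$ for a small enough absolute constant $c_0$, the distribution of $\IP_\R(x,y)$ under the uniform law on $A\times B$ remains ``Gaussian-like'', placing $\Omega(1)$ mass on the window $[-\sqrt{n}/8,\sqrt{n}/8]$.

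I would focus on the case $b=1$. Because $\ORT_{n,1/8}^{-1}(0) = \{(x,y):|\IP_\R(x,y)|\leq \sqrt{n}/8\}$, a $\rho$-error $1$-monochromatic rectangle $R=A\times B$ satisfies $\mu_3(\{(x,y)\in R:|\IP_\R(x,y)|\leq \sqrt{n}/8\}) \leq \rho\cdot \mu_3(R)$. The anti-concentration lemma above, taken with $\rho<c_0$, forces $\mu_3(R)<2^{-c_0 n}\leq 2^{-\rho n}$, yielding $\cb^\rho_{\mu_3}(\ORT_{n,1/8})\geq \rho\cdot n$.

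To prove the anti-concentration lemma, I would analyze the characteristic function
\[
 \phi_R(\lambda) = \expec_{(x,y)\in R}\bigl[e^{i\lambda \IP_\R(x,y)}\bigr] = \expec_{(x,y)\in R}\Bigl[\prod_{i=1}^n \bigl(\cos\lambda + i\sin\lambda\cdot(-1)^{x_i+y_i}\bigr)\Bigr].
\]
Expanding the product and collecting terms in the Walsh--Fourier basis gives
\[
 \phi_R(\lambda) = (\cos\lambda)^n \sum_{S\subseteq [n]} (i\tan\lambda)^{|S|}\cdot \frac{2^n\widehat{\mathbf{1}_A}(S)}{|A|}\cdot \frac{2^n\widehat{\mathbf{1}_B}(S)}{|B|}.
\]
The term $S=\emptyset$ contributes $(\cos\lambda)^n\approx e^{-\lambda^2 n/2}$, the characteristic function of a Gaussian of variance $n$. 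The remaining terms can be bounded level by level by Cauchy--Schwarz together with Parseval's identity $\sum_S\widehat{\mathbf{1}_A}(S)^2=|A|/2^n$. A standard Fourier smoothing / inversion argument then converts ``$\phi_R$ is close to Gaussian on a useful range of $\lambda$'' into ``the law of $\IP_\R(x,y)$ on $R$ puts constant mass on $[-\sqrt{n}/8,\sqrt{n}/8]$''.

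The main obstacle is sharpening the level-$k$ estimates enough to ensure $c_0$ is an absolute constant that does not shrink with $n$: the naive combination of Cauchy--Schwarz and Parseval loses factors of order $2^{c_0 n/2}$ that must be absorbed by geometric decay in $k$. This is where Bonami--Beckner hypercontractivity is likely needed, controlling the $L^4$ (or higher) norms of the low-degree projections of $\mathbf{1}_A$ and $\mathbf{1}_B$ and thereby tightening the contribution of each Fourier level to $\phi_R$. A structurally different route --- and the one Sherstov's original proof follows --- is the pattern-matrix method, which reduces the corruption bound for $\ORT_{n,1/8}$ to an approximation-theoretic question about the sign function on $\{-n,\ldots,n\}$ and uses an explicit dual-witness construction to obtain the claimed linear-in-$n$ corruption bound.
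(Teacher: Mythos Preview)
The paper does not prove this theorem; it is quoted verbatim from Sherstov~\cite{sherstov11} and used as a black box in the applications section. There is therefore nothing in the present paper to compare your proposal against.

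A brief remark on content, since you sketched an argument anyway. Your high-level plan is the right one and matches Sherstov's: one shows that on any rectangle $R=A\times B$ of $\mu_3$-measure at least $2^{-c_0 n}$, the law of $\IP_\R(x,y)$ is anti-concentrated, in particular puts $\Omega(1)$ mass on $\{(x,y):|\IP_\R(x,y)|\le \sqrt n/8\}=\ORT_{n,1/8}^{-1}(0)$, which forbids large $1$-monochromatic rectangles. However, the technical route you outline (characteristic function, level-by-level Fourier bounds, possibly hypercontractivity) is not what Sherstov does, and your final sentence misidentifies his method: the corruption bound for $\ORT$ in \cite{sherstov11} is \emph{not} via the pattern-matrix method. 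Sherstov instead proves the anti-concentration lemma using Talagrand's concentration inequality for convex $1$-Lipschitz functions of independent bounded variables, applied to the map $y\mapsto \max_{x\in A}\langle x,y\rangle/\sqrt n$; this gives, for any large $A$, that $\langle x,y\rangle$ with $x$ uniform in $A$ and $y$ fixed cannot concentrate in a short interval, and then one averages over $y\in B$. Your Fourier/hypercontractivity sketch is plausible as an alternative proof strategy but, as you yourself note, the obstacle of keeping $c_0$ an absolute constant is real and you have not indicated how to overcome it; absent that, the argument is incomplete.
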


By this theorem and Theorem~\ref{thm:MainResult} we get a lower bound for data structures for $\ORT_{n,\frac{1}{8}}$.
By reductions used by Sherstov~\cite{sherstov11} we also get a lower bounds for $\ORT$ and $\GH$.
\begin{align*}
 \ORT_{n,\frac{1}{8}}(x,y) =~ &\ORT_{64n}\bigl(x^{64},y^{64}\bigr) \\
 \ORT_n(x,y) = ~&\GH_{10n + 15\sqrt{n}}\bigl(x^{10}1^{15\sqrt{n}}, y^{10}0^{15\sqrt{n}}\bigr) \\
 &\wedge \neg\GH_{10n + 15\sqrt{n}}\bigl(x^{10}0^{15\sqrt{n}}, y^{10}0^{15\sqrt{n}}\bigr)
\end{align*}
Where $s^i$ denote $i$ copies of $s$ concatenated together.
Let $D$ be a semi-adaptive random scheme for the multiphase problem of the presented functions with sufficiently small error probability.
By the theorems presented in this section and by Theorem~\ref{thm:MainResult}, we can derive the following lower bounds for $\tq\cdot w$, assuming that $\tu \cdot n \leq o\bigl(k / w\bigr)$.


\begin{center}
\begin{tabular}{|c|c|c|}
\hline
 Function $f$ & Ballancedness            & Lower bound  \\
              & of the hard distribution & of $\tq\cdot w$ \\
 \hline
 $\IP$      & $\frac{1}{2}$            & $\Omega(n)$ \\
 $\Disj$    & $\sim \frac{1}{e}$       & $\Omega(\sqrt{n})$ \\
 $\ORT_n$     &    $\Theta(1)$                      & $\Omega(n)$ \\
 $\GH_n$      &         N/A (lower-bound is via reduction)             & $\Omega(n)$ \\
 \hline
\end{tabular}
\end{center}

\bibliography{multiphase}


\end{document}